\newcommand{\IN}{\mathbb{N}}
\newcommand{\IZ}{\mathbb{Z}}
\newcommand{\IQ}{\mathbb{Q}}
\newcommand{\IR}{\mathbb{R}}
\newcommand{\calO}{\mathcal{O}}
\newcommand{\COMMENTED}[1]{}
\newcommand{\mycite}[2]{\cite[\textsc{#1}]{#2}}
\newcommand{\timesc}{\times_{\text{c}}}
\newcommand{\mydiv}{\operatorname{\tt div}}
\newcommand{\myrem}{\:\operatorname{\tt rem}\:}
\newcommand{\myand}{\operatorname{\text{\sf\&}}}
\newcommand{\bigint}{Z}
\newcommand{\medint}{Y}
\newcommand{\loglog}{\operatorname{loglog}}
\newcommand{\perm}{\operatorname{perm}}
\newcommand{\sign}{\operatorname{sgn}}
\newcommand{\Card}{\operatorname{Card}}
\newcommand{\gcdex}{\operatorname{gcdex}}
\spnewtheorem{algorithm}[theorem]{Algorithm}{\bfseries}{\itshape}
\spnewtheorem{scholium}[theorem]{Scholium}{\bfseries}{\itshape}
\newtheorem{examplef}[theorem]{Example\footnotemark}
\spnewtheorem{observation}[theorem]{Observation}{\bfseries}{\itshape}
\begin{document}
\title{On Faster Integer Calculations \\ using Non-Arithmetic Primitives\thanks{%
M.~Ziegler is supported by \textsf{DFG} project \texttt{Zi1009/1-1}}}
\titlerunning{On Faster Integer Calculations using Non-Arithmetic Primitives}
\authorrunning{K.~L\"{u}rwer-Br\"{u}ggemeier and M.~Ziegler}
\author{K.~L\"{u}rwer-Br\"{u}ggemeier and M.~Ziegler}
\institute{Heinz Nixdorf Institute,
University of Paderborn, 
33095 Germany}

\date{}
\maketitle
\def\thefootnote{\fnsymbol{footnote}}\addtocounter{footnote}{1}
\begin{abstract}
The unit cost model is both convenient and largely realistic 
for describing integer decision algorithms over $+,\times$.
Additional operations like division with 
remainder or bitwise conjunction, although equally 
supported by computing hardware, may lead to a 
considerable drop in complexity. We show a variety of
concrete problems to benefit from such \emph{non-}arithmetic
primitives by presenting and analyzing corresponding fast algorithms.
\end{abstract}
\section{Introduction} \label{s:Intro}
The Turing machine is generally accepted as the appropriate model 
for describing both the capabilities (computability) and the complexity
(bit cost) of calculations on actual digital computers; 
but it is cumbersome to handle when
developing algorithms (upper complexity bounds) as well as for proving
lower bounds, and therefore often replaced by algebraic models such
as the random access machine (RAM). The latter operates on entire integers
(as opposed to bits) and comes in various flavors depending on which
primitives it is permitted to employ: 
e.g. incrementation, addition, subtraction, multiplication, division,
integer constants, 
bitwise conjunction, shifts ``$\leftarrow,\rightarrow$'', 
indirect addressing etc.
Notice that both bitwise conjunction ``$\myand$'' 
and integer division ``$\mydiv$'' (when the numerator
is not a multiple of the denominator) 
are \emph{non-}arithmetic operations over $\IZ$
yet commonly hardware supported by digital computers 
(see Section~\ref{s:Practice} below).

The choice among these instructions heavily affect a RAM's power
in comparison to the normative Turing machine;
e.g. a decision based on polynomially many applications
of $(+,-,\times)$\footnote{but no test for inequality, 
which in the sequel we shall implicitly permit} can, 
although possibly giving rise to
exponentially long intermediate results, 
be simulated within \textsf{RP} \cite{Schoenhage};
whereas polynomially many steps over 
$(+,\times,\mydiv)$ cover already \textsf{NP} \cite{Schoenhage};
and over $(+,-,\times,\myand)$ even entire
\textsf{PSPACE} \cite{Pratt}. 
We are interested in the effect of these additional
instructions to selected problems of complexity much lower
than polynomial; specifically
for accelerating to linear and sublinear running times
as in the spirit of the following

\begin{examplef} \label{x:Example}%
\footnotetext{We thank \textsc{Riko Jacob} for
pointing us to Items~d) and e) in this example.}
\begin{enumerate}
\item[a)] 
Over $(+,-,\times,\mydiv)$,
not only primality test but even
\textsf{factorization} of a given integer $x$ 
is possible in time $\calO(\log x)$
linear in its binary length.
\item[b)]
Given $a,k\in\IN$ and some arbitrary integer $b\geq a^{2^k}$,
one can compute $a^{2^k}$ over $(+,-,\times,\mydiv)$
within $\calO(\sqrt{k})$ steps.
\item[c)]
Over $(+,-,\times,\mydiv)$ and using indirect addressing,
the greatest common divisor $\gcd(x,y)$
of given integers 
can be calculated in $\calO(\log N/\loglog N)$ steps.
where $N=\max\{x,y\}$.
\item[d)]
Over $(+,-,\myand,\leftarrow,\rightarrow)$
(but with\emph{out} indirect addressing as for Bucket Sort),
$n$ given integers $x_1,\ldots,x_n$ can be sorted
in $\calO(n)$; \\ over $(+,-,\times,\mydiv)$ this
can be achieved in $\calO(n\cdot\loglog\max_i x_i)$.
\item[e)]
\textsf{3SUM}, that is the question whether
to given integers $x_1,\ldots,x_n,y_1,\ldots,y_n,z_1,\ldots,z_n$
there exist $i,j,k$ with $x_i+y_j=z_k$, 
can be decided in $\calO(n)$ operations 
over $(+,-,\times,\myand)$.
\qed\end{enumerate}
\end{examplef}
Otherwise, 3SUM is considered `$n^2$--complete' in a
certain sense \cite{Gajentaan}.
Regarding d), describing the permutation mapping 
the input to its sorted output requires $\Omega(n\cdot\log n)$ bits.
Similarly, compare c) with the running time $\Theta(\log N)$
of the Euclidean algorithm attained on Fibonacci numbers
$x=F_n=N$, $y=F_{n-1}$.
And finally observe that in b) mere repeated squaring, i.e. 
without resorting
to integer division, yields only running time $\calO(k)$;
cf. Section~\ref{s:Power} below.
\begin{proof} 
a) See \cite{Shamir}; b) see \cite{Tiwari} or Section~\ref{s:Power} below; 
c) see \cite{B89}; d) see \cite{Kirkpatrick}, and \cite{Han}
for an account of more recent results on sorting using various
sets of operations and costs.

Claim~e) can be concluded from 
(the much more general considerations including word-length 
and non-uniform instruction cost analyses in) 
work \cite{Demaine} which, applied to our setting, 
simplifies to the following observation:
For $0\leq a_0,\ldots,a_{N-1},b_0,\ldots,b_{N-1}<2^{t-1}$, 
let $A:=\sum_{i=0}^{N-1} a_i\cdot 2^{ti}$, $B:=\sum_i b_i\cdot 2^{ti}$,
and $C:=\sum_i 2^{t-1}\cdot 2^{ti}$. Then 
\[ \forall i=0,\ldots,N-1: \; a_i\geq b_i 
\quad\Leftrightarrow\quad (A+C-B)\myand C\;=\;C \enspace . \]
In particular, subject to the above encodings,
``$\exists i: a_i=b_i$'' can be tested in constant time
over $(+,-,\myand)$.
Now such an encoding can be obtained for the
\emph{double} sequence $(x_i+y_j)_{i+nj}$ in
\emph{linear} time $\calO(n)$ over $(+,-,\times)$;
cf. e.g. our proof of Observation~\ref{o:Mamu} below.
\qed\end{proof}
%

\section{Polynomial Evaluation}
occurs ubiquitously in computer science, e.g. in connection with splines
or with Reed-Solomon codes. It is commonly performed by Horner's method
using $\calO(d)$ arithmetic operations where $d$ denotes the polynomial's
degree. While this has been proven optimal in many cases 
\mycite{Theorem~6.5}{ACT}, over (certain subrings of)
integers it is \emph{not}. 
Specifically, if the integer polynomial to be evaluated has coefficients which
are small (e.g. only 0s and 1s) compared to its degree, 
Horner's method can be slightly
accelerated:

\begin{proposition} \label{p:Ramsey}
Given $p_0,\ldots,p_{d-1}\in\IZ$
with $|p_n|\leq P\in\IN$ and $x\in\IZ$,
$\sum_{n=0}^{d-1} p_n x^n$ can be calculated using $\calO(d/\log_P d)$
operations over $(+,\times)$.
\end{proposition}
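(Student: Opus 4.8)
The plan is a precompute-then-combine strategy: chop the exponent range into short blocks, tabulate once and for all the (few) possible values of a block at $x$, and then run Horner's scheme only at the level of blocks. Write $R:=2P+1$ for the number of admissible coefficient values and assume $2\le P<d$: for $P\ge d$ one has $\log_P d\le 1$, so ordinary Horner already meets the bound $\calO(d)=\calO(d/\log_P d)$, and $P\le 1$ is subsumed by taking $P:=2$. Choose a block length $k=\Theta(\log_R d)$ --- concretely the largest $k$ with $R^k\le d/\log_R d$ --- and split $\{0,\dots,d-1\}$ into $m=\lceil d/k\rceil$ consecutive blocks (padding the coefficient sequence with zeros up to length $mk$), so that
\[
  \sum_{n=0}^{d-1}p_n x^n \;=\; \sum_{j=0}^{m-1} x^{jk}\cdot Q^{(j)}(x),
  \qquad Q^{(j)}(x):=\sum_{i=0}^{k-1}p_{jk+i}\,x^i .
\]
The point is that every block value $Q^{(j)}(x)$ lies in the fixed set $\mathcal T:=\bigl\{\sum_{i<k}c_i x^i : c\in\{-P,\dots,P\}^k\bigr\}$ of at most $R^k$ integers, and $\mathcal T$ does not depend on the particular polynomial.

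First I would spend $k-1$ multiplications on the powers $x^2,\dots,x^k$ and then build all of $\mathcal T$ using additions only: proceeding coordinate by coordinate, from the $R^l$ values already formed out of the first $l$ coordinates, each choice of the next coordinate $c$ sends an old value $v$ to $v+c\,x^l$, and the whole fan $v+(-P)x^l,\,v+(-P+1)x^l,\,\dots,\,v+Px^l$ is produced by one addition of the precomputed constant $(-P)x^l$ followed by repeated additions of $x^l$. This costs $\calO(R^k)$ additions and $\calO(k)$ multiplications in total. By the choice of $k$ we have $R^k\le d/\log_R d$ while still $d/k=\calO(d/\log_R d)$, so both quantities stay within the target; making these two $k$-dependencies fit at once is the single point that must be checked by a short calculation, since a longer block blows up $|\mathcal T|$ and a shorter one fails to amortise Horner.

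It then remains to combine the blocks by Horner's scheme with $x^k$ playing the role of the variable: iterate $H\leftarrow H\cdot x^k+Q^{(m-1-t)}(x)$ for $t=0,\dots,m-1$, which is $\calO(m)=\calO(d/\log_R d)$ operations. Here the smallness of the coefficients pays off: since the polynomial --- hence the sequence of block patterns --- is fixed in advance, for each $j$ it is already known which entry of $\mathcal T$ equals $Q^{(j)}(x)$, so no operations are charged for ``looking it up''. Adding the three contributions gives $\calO(R^k)+\calO(k)+\calO(m)=\calO(d/\log_R d)=\calO(d/\log_P d)$ operations over $(+,\times)$, using $\log_R d=\Theta(\log_P d)$ for $P\ge 2$; for $d$ below an absolute constant the claim is trivial anyway. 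The obstacle I anticipate is not any single clever step but precisely this bookkeeping --- choosing $k$ so that the universal table $\mathcal T$ is cheap, the block-Horner is short, and the reduction to $P<d$ all collapse into the one clean bound $\calO(d/\log_P d)$.
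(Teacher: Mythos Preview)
Your proof is correct and follows essentially the same approach as the paper: block decomposition into length-$k$ pieces, precompute all possible block values at $x$, then Horner over $x^k$. The only cosmetic differences are that the paper first splits off the negative coefficients (so it tabulates over $\{0,\ldots,P\}^k$ rather than your $\{-P,\ldots,P\}^k$) and merely says ``dynamic programming achieves $\calO(P^k)$'' where you spell out the fan construction explicitly; your edge-case handling for $P\geq d$ and $P\leq1$ is also more careful than the paper's. One small wording issue: you justify the free lookup by saying the polynomial is ``fixed in advance,'' but in the proposition the coefficients are inputs---the lookup is free simply because only $(+,\times)$ operations are counted, not because the polynomial is a compile-time constant (indeed, if it were, the table could be precomputed offline and the whole $\calO(R^k)$ term would disappear).
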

\begin{proof}
We treat the terms with negative coefficients separately
and may therefore suppose $p_n\geq0$.
For $k\in\IN$ to be chosen later, 
decompose $p$ into $\lceil d/k\rceil$ polynomials $q_i\in\IN[X]$
of degree less than $k$. Notice that, since their coefficients
belong to $\{0,1,\ldots,P-1\}$, there exist at most 
$P^k$ distinct such polynomials. Evaluate \emph{all} of them
at the given argument $x\in\IZ$: $P^k$ separate executions of
Horner's method result in a total running time of $\calO(k\cdot P^k)$;
but dynamic programming achieves the same within $\calO(P^k)$.
In a second phase, apply Horner to evaluate
$\sum_{i=0}^{\lceil d/k\rceil} q_i(x)\cdot Y^i$ at
$Y=x^k$ and obtain $p(x)$ as desired:
Together this leads to a total number
of operations of $\calO(d/k + P^k)$ which,
for $k\approx\log_P d-\log_P\log_P d$, 
becomes $\calO(d/\log_P d)$ as claimed.
\qed\end{proof}
\subsection{Throwing in Integer Division} 
The running time obtained in Proposition~\ref{p:Ramsey}
is sublinear but still dependent on the degree $d$ of
the polynomial under consideration. For fixed $p$,
\cite{Root,B92} had observed that,
surprisingly, this dependence vanishes when admitting 
integer division as operational primitive:

\begin{algorithm} \label{a:Bshouty}
Fix $X\in\IN$ and an integer polynomial with nonnegative coefficients
$p\in\IN[x]$. Then, for $\bigint\in\IN$ sufficiently large,
one can evaluate $\{0,1,\ldots,X\}\ni x\mapsto p(x)$ as follows:
\begin{enumerate}
\item[1)] Input $x\in\{0,1,\ldots,X\}$.
\item[2)] Compute $\bigint^{d+1} \;\mydiv\; (\bigint-x)$
\item[3)] multiply the result to $p(\bigint)$
\hfill\raisebox{0.0ex}[0pt][0pt]{\includegraphics[width=0.5\textwidth]{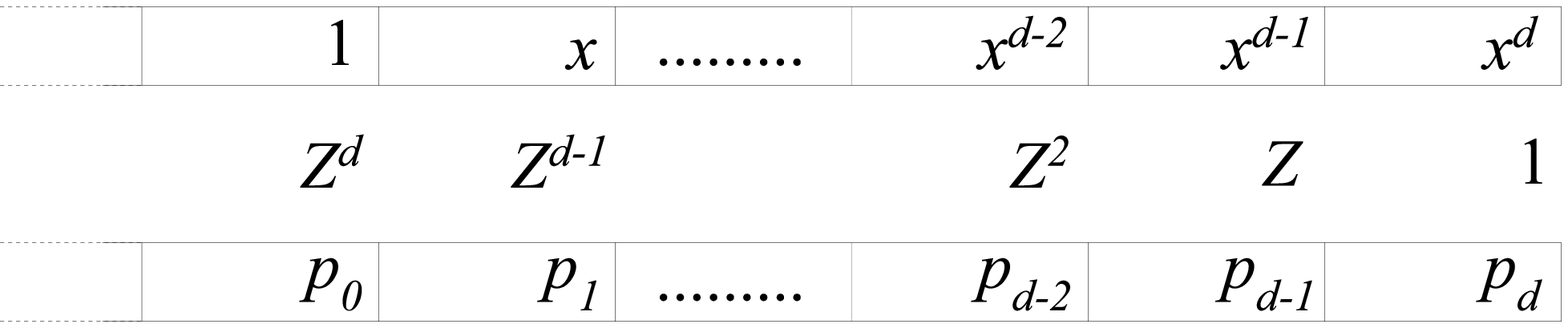}}%
\item[4)] integer-divide this in turn by $\bigint^d$ and
\item[5)] output the remainder from division by $\bigint$.
\qed\end{enumerate}
\end{algorithm}
By pre-computing and storing the integers $\bigint,\bigint^d$, and $p(\bigint)$
one arrives at

\begin{corollary} \label{c:Bshouty}
Over integer operations $\{+,-,\timesc,\mydiv\}$,
an \emph{arbitrary} fixed polynomial $p\in\IZ[x]$ can be
evaluated on an arbitrary finite domain $D\subseteq\IZ$
in constant time
\emph{independent} of $p$ and (of its degree and of) $D$.
\end{corollary}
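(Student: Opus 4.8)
The plan is to reduce Corollary~\ref{c:Bshouty} to Algorithm~\ref{a:Bshouty} in two stages: first handling an arbitrary (rather than finite) polynomial $p\in\IZ[x]$ by splitting into positive and negative coefficient parts, and second replacing the abstract ``sufficiently large $\bigint$'' by a concrete choice that works uniformly over the finite domain $D$. For the first stage I would write $p=p^+-p^-$ with $p^+,p^-\in\IN[x]$ having nonnegative coefficients (as already done in the proof of Proposition~\ref{p:Ramsey}), evaluate each of $p^+$ and $p^-$ by Algorithm~\ref{a:Bshouty}, and subtract; this costs a constant number of additional $(+,-)$ operations. The correctness of Algorithm~\ref{a:Bshouty} itself rests on the identity $\bigint^{d+1}\mydiv(\bigint-x)=\sum_{i=0}^{d}\bigint^{i}x^{d-i}+(\text{lower-order correction})$, i.e. the geometric-series expansion $\tfrac{1}{\bigint-x}=\sum_{i\ge0}x^i\bigint^{-i-1}$ truncated by integer division; multiplying by $p(\bigint)=\sum_n p_n\bigint^n$ and extracting the appropriate ``digit'' in base $\bigint$ (steps~4 and~5) recovers $\sum_n p_n x^n=p(x)$, provided $\bigint$ exceeds all the intermediate coefficients that arise.

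Second, I would make the choice of $\bigint$ explicit and finite-domain-uniform. Since $D\subseteq\IZ$ is finite, set $X:=\max_{x\in D}|x|$; the bound $\bigint> $ (some fixed polynomial expression in $X$, the degree $d$ of $p$, and $\max_n|p_n|$) suffices, because the coefficients appearing after step~3 are bounded by $d\cdot\max_n|p_n|\cdot X^{d}$ or similar. Concretely one may take $\bigint:=(d+1)\cdot\big(1+\max_n|p_n|\big)\cdot(1+X)^{d+1}$ or any convenient over-estimate; the point is merely that such a $\bigint$ exists and, crucially, is a \emph{constant} once $p$ and $D$ are fixed, so it — together with $\bigint^d$ and $p(\bigint)$ — may be precomputed and stored as in the remark preceding the corollary. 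Then steps~1--5 use only one $\mydiv$, one $\timesc$, one further $\mydiv$, one $\myrem$ (expressible via $\mydiv$, $\timesc$, $-$), plus the constant overhead from the sign-splitting: a fixed number of operations from $\{+,-,\timesc,\mydiv\}$, independent of $p$, its degree, and $D$.

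The main obstacle I anticipate is bookkeeping the exact inequality that $\bigint$ must satisfy: one has to track that \emph{no} carry propagates between the base-$\bigint$ ``digits'' through steps~2--4, which requires the truncated geometric series, the coefficients of $p$, and the domain bound $X$ to interact cleanly. This is the place where the informal ``$\bigint$ sufficiently large'' in Algorithm~\ref{a:Bshouty} must be pinned down; it is routine but must be done carefully, in particular noting that the intermediate integer in step~3 has length polynomial in $d$ and $\log\bigint$, so unit-cost arithmetic on it is legitimate. Once that inequality is in hand, the corollary follows immediately, since everything depending on $p$ and $D$ has been pushed into precomputed constants and the online computation is a fixed-length straight-line program over $\{+,-,\timesc,\mydiv\}$.
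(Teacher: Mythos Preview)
Your approach matches the paper's: split $p=p^{+}-p^{-}$ into nonnegative-coefficient parts, invoke Algorithm~\ref{a:Bshouty} on each, and push the dependence on $p$ and $D$ into the precomputed constants $\bigint$, $\bigint^{d}$, $p^{\pm}(\bigint)$. The explicit bound you seek on $\bigint$ is precisely what Scholium~\ref{s:Bshouty} supplies, so that part of your ``main obstacle'' is already taken care of in the paper.

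One point you glide over: Algorithm~\ref{a:Bshouty} as stated requires $x\in\{0,1,\ldots,X\}$, and the geometric-series identity underlying Step~2 (together with the digit extraction in Steps~4--5) uses $0\le x<\bigint$. Your choice $X:=\max_{x\in D}|x|$ reads as though you intend to feed possibly negative $x$ directly into the algorithm, which is not justified by what you have written. The paper handles this by an additional reduction you omit: for negative arguments, evaluate the polynomial $q(t):=p(-t)$ at $t=|x|\ge 0$, which returns $p(x)$ and brings us back to a nonnegative input at the cost of one more set of stored constants (namely $q^{\pm}(\bigint)$). This is a one-line fix, but it should be made explicit; otherwise your argument covers only $D\subseteq\IN$.
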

Here, $\timesc$ denotes unary multiplication (scaling) 
of the argument by a fixed integer constant.
Indeed, evaluation of $p$ at a negative argument $-x$ 
reduces to the evaluation at positive $x$ of $p(-x)$;
and every integer polynomial is the difference of two
with nonnegative coefficients.

Concerning the correctness of Algorithm~\ref{a:Bshouty},
we repeat a proof due to \cite{B92} and obtain the following strengthening
used in Section~\ref{s:Practice}:
\begin{scholium} \label{s:Bshouty}
Let $p=\sum_{n=0}^d p_n x^n$ be of degree at most $d$ 
and norm $\|p\|_1:=|p_0|+\cdots+|p_d|\leq P$;
then every $\bigint>\max\{X^d\cdot P,(X^d+1)\cdot X\}$ \emph{is}
`sufficiently large' for Algorithm~\ref{a:Bshouty} to succeed.
\end{scholium} 
\begin{proof} It holds \ 
$\bigint^{d+1} \;\operatorname{div}\; (\bigint-x)
\;=\; \;\big\lfloor \bigint^d / (1-\tfrac{x}{\bigint}) \big\rfloor \;=\;
 \big\lfloor \bigint^d \cdot\sum\nolimits_{m=0}^\infty (x/\bigint)^m \big\rfloor \;=$
\begin{eqnarray}
&=& \lfloor \underbrace{\bigint^d + \bigint^{d-1} x +\cdots + \bigint x^{d-1} + x^d}_{\in\IN}  \nonumber
  \;\;+\;\;
\underbrace{\sum\nolimits_{m=d+1}^\infty \bigint^d (x/\bigint)^m}_{=x^{d+1}/(\bigint-x)<1}\rfloor \\[-2ex] 
&=& \bigint^d + \bigint^{d-1} x +\cdots + \bigint x^{d-1} + x^d \label{e:GeomDiv}
\end{eqnarray}
since $\bigint>(x^d+1)\cdot x$ by prerequisite. Hence the result of Step~3 equals
\[ \sum\nolimits_{n,m=0}^d  p_n\cdot x^m\cdot \bigint^{n+d-m} 
 \;=\; \sum\nolimits_{k=0}^{2d} q_k \cdot \bigint^k \]
where $0\leq q_k<\bigint$ for $k\leq d$ because $\bigint>x^d\cdot (p_0+\cdots+p_d)$.
In particular $q_d=\sum_{n=0}^d p_n \cdot x^n=p(x)$
is isolated by Steps~4 and 5.
\qed\end{proof}
%
\subsection{First Consequences}
\begin{corollary} \label{c:sequence}
Over integer operations $\{+,-,\timesc,\mydiv\}$,
every finite integer sequence $y_0,y_1,\ldots,y_N$ 
(or, more formally, the mapping $\{0,1,\ldots,N\}\ni n\mapsto y_n$)
is computable in constant time \emph{independent} of 
(the length $N$ of) the sequence!
\end{corollary}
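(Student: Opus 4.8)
The plan is to reduce the statement to Corollary~\ref{c:Bshouty} via polynomial interpolation, the only twist being that the interpolating polynomial is a priori rational rather than integral. First I would apply classical Lagrange interpolation to the given data $y_0,\ldots,y_N$, obtaining the unique polynomial
\[ p(x)\;=\;\sum_{n=0}^{N} y_n\prod_{j\neq n}\frac{x-j}{n-j}\;\in\;\IQ[x] \]
of degree at most $N$ with $p(n)=y_n$ for every $n\in D:=\{0,1,\ldots,N\}$. Since this $p$ generally has merely \emph{rational} coefficients, Corollary~\ref{c:Bshouty} does not apply to it directly; clearing denominators is the essential intermediate step.

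To that end I would observe that $\prod_{j\neq n}(n-j)=(-1)^{N-n}\,n!\,(N-n)!$ divides $N!$ with integer quotient $(-1)^{N-n}\binom{N}{n}$, whereas $\prod_{j\neq n}(x-j)$ is monic with integer coefficients. Consequently
\[ \tilde p(x)\;:=\;N!\cdot p(x)\;=\;\sum_{n=0}^{N}(-1)^{N-n}\binom{N}{n}\,y_n\cdot\prod_{j\neq n}(x-j)\;\in\;\IZ[x] \]
is a genuine integer polynomial of degree at most $N$ satisfying $\tilde p(n)=N!\cdot y_n$ throughout the finite domain $D$. Being a \emph{fixed} integer polynomial evaluated on a \emph{fixed} finite set, $\tilde p$ falls squarely within the scope of Corollary~\ref{c:Bshouty}: after a one-time precomputation of the relevant constants (those of Algorithm~\ref{a:Bshouty} for $\tilde p$, together with the integer $N!$), the map $D\ni n\mapsto\tilde p(n)=N!\cdot y_n$ is computable in a constant number of operations over $\{+,-,\timesc,\mydiv\}$, independent of $N$.

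It then remains only to recover $y_n$ from $N!\cdot y_n$ by a single division by the stored constant $N!$; as $N!$ divides $\tilde p(n)$ exactly, $\mydiv$ returns precisely $y_n$ with no rounding ambiguity, including for negative $y_n$. Altogether this is $\calO(1)$ operations, independent of the length $N$ of the sequence, as claimed.

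The main obstacle I anticipate is conceptual rather than computational: one must notice that the naive interpolant lives in $\IQ[x]$ and that scaling by $N!$ simultaneously makes it integral \emph{and} keeps every value an exact multiple of $N!$, so that the concluding integer division is exact and reproduces the sequence term verbatim. Invoking the astronomically large — but, in the unit-cost model, harmless — precomputed constant $N!$ is entirely in the spirit of the precomputations already carried out in Algorithm~\ref{a:Bshouty}.
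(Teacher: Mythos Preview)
Your argument is correct and follows precisely the paper's own route: interpolate to obtain a rational polynomial, clear denominators to land in $\IZ[x]$, invoke Corollary~\ref{c:Bshouty}, and finish with one exact integer division by the stored scaling factor. The only difference is that you make the construction fully explicit (choosing $M=N!$ and verifying it clears all Lagrange denominators), whereas the paper simply posits the existence of such an $M$; your version is therefore a faithful and slightly more detailed rendering of the same proof.
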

\begin{proof}
Consider an interpolation polynomial $p\in\IQ[X]$
of degree $\leq N+1$ such that $p(n)=y_n$, $n=0,\ldots,N$.
Take $M\in\IN$ such that $M\cdot p\in\IZ[X]$.
Apply Corollary~\ref{c:Bshouty} in order
to calculate $n\mapsto M\cdot p(n)$ in constant
time, then integer-divide the result by $M$.
\qed\end{proof}
It has been shown in \cite{JMW89} that
every language $L\subseteq\IZ$ (rather than
$\IZ^*$) which can be decided 
over $\{+,-,\timesc,\mydiv\}$ at all,
can be decided in constantly many steps;
that is in time independent of the input $x\in\IZ$---\emph{but} 
of course depending on $L$. 
\begin{observation}
Every \emph{finite} language $L\subseteq\IZ$
is decidable over integer operations $\{+,-,\timesc,\mydiv\}$
within constant time \emph{in}dependent of $L$.
\end{observation}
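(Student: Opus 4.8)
The plan is to reduce membership in a finite set $L=\{a_1,\dots,a_m\}\subseteq\IZ$ to the evaluation of a single fixed integer polynomial, and then to invoke Corollary~\ref{c:Bshouty}, whose cost is a constant independent of that polynomial. Concretely, I would take $q:=\prod_{i=1}^{m}(x-a_i)\in\IZ[x]$, whose integer zeros are exactly the elements of $L$, so that ``$x\in L$'' becomes ``$q(x)=0$'', a condition read off by a single equality test once $q(x)$ is available. (If $L=\emptyset$ the algorithm simply always rejects.)

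The one gap is that Algorithm~\ref{a:Bshouty}, and hence Corollary~\ref{c:Bshouty}, evaluates a polynomial only on a fixed finite domain $D\subseteq\IZ$ --- the geometric-series identity in Scholium~\ref{s:Bshouty} needs $\bigint$ large relative to the arguments --- whereas the input $x$ ranges over all of $\IZ$. Here I would exploit that a \emph{finite} $L$, unlike a general decidable language, is automatically contained in a bounded interval: putting $X:=\max_i|a_i|$, if $|x|>X$ then certainly $x\notin L$ and we reject at once, while otherwise $x$ lies in $D:=\{-X,\dots,X\}$, on which Corollary~\ref{c:Bshouty} computes $q(x)$ in constantly many $\{+,-,\timesc,\mydiv\}$ operations. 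Inequality tests being available by the paper's standing convention, this case distinction costs $\calO(1)$ as well.

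Summing up --- two comparisons, the constant-cost evaluation of $q$, one test against zero --- the operation count is a fixed absolute constant, independent of $m$, of the magnitudes of the $a_i$, and of everything else about $L$; only the integer constants hard-wired into the program (the threshold $X$, the coefficients of $q$, and the bound $\bigint$ chosen via Scholium~\ref{s:Bshouty}) depend on $L$, and their presence as constants contributes nothing to the count. I expect the sole point needing care to be exactly this boundedness issue --- recognising that the unbounded tail must be, and can be, disposed of by one comparison \emph{before} the Bshouty-style evaluator is invoked, since otherwise no single $\bigint$ would work; the rest is immediate.
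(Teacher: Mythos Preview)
Your argument is correct. It differs from the paper's route, though both ultimately rest on Corollary~\ref{c:Bshouty}. The paper invokes Corollary~\ref{c:sequence}: it interpolates the \emph{characteristic} sequence $(y_0,\ldots,y_N)$ of $L\subseteq\{0,\ldots,N\}$ by a rational polynomial, clears denominators, evaluates via Bshouty, and divides back. You instead pass directly to the \emph{vanishing} polynomial $q=\prod_i(x-a_i)$, which is already in $\IZ[x]$ and whose integer zero set is exactly $L$; this avoids the interpolation--plus--denominator step and is a shade more elementary. Your explicit treatment of the out-of-range case (reject if $|x|>X$ before calling the evaluator) is a point the paper's terse proof leaves implicit but which is equally needed there, since Corollary~\ref{c:sequence} only covers inputs in $\{0,\ldots,N\}$. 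Either way, the operation count is an absolute constant; only the stored integer constants depend on $L$.
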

\begin{proof}
Let $L\subseteq\{0,1,\ldots,N\}$ and apply Corollary~\ref{c:sequence}
to the characteristic sequence $(y_0,\ldots,y_N)$ of $L$,
defined by $y_n:=1$ for $n\in L$ and $y_n:=0$ for $n\not\in L$.
\qed\end{proof}
The next subsection implies the same to hold
for finite sequences $(\vec y_0,\ldots,\vec y_N)$ in $\IZ^d$
and for finite languages $L\subseteq\IZ^d$ as long as
$d$ is fixed.
\subsection{Multi-Variate Case}
We extend Algorithm~\ref{a:Bshouty} to obtain

\begin{proposition} \label{p:multivar}
Over integer operations $\{+,-,\times,\mydiv\}$,
any fixed polynomial $p\in\IZ[x_1,\ldots,x_n]$ can be
evaluated on an arbitrary finite domain $D\subseteq\IZ^n$
in time $\calO(n)$ independent of $p$ and $X$.
\end{proposition}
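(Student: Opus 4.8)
The plan is to reduce the multi-variate evaluation to $n$ successive applications of the univariate Algorithm~\ref{a:Bshouty}, peeling off one variable at a time. Write $p\in\IZ[x_1,\ldots,x_n]$ as a polynomial in $x_n$ whose coefficients are polynomials in $x_1,\ldots,x_{n-1}$. The key idea is to \emph{encode} the finitely many coefficient-polynomials $c_0(x_1,\ldots,x_{n-1}),\ldots,c_D(x_1,\ldots,x_{n-1})$ simultaneously inside the digits of a single large integer, base $\bigint$: since the domain $D$ is finite, each $c_j$ takes values in a bounded range, so $\sum_j c_j(\vec x)\cdot\bigint^{j}$ fits in a predictable number of base-$\bigint$ blocks with no carries, exactly as in the Scholium. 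Then one invocation of the geometric-series trick $\bigint^{?}\mathbin{\mathtt{div}}(\bigint-x_n)$, a multiplication, a division and a remainder extracts $\sum_j c_j(\vec x)\,x_n^{\,j}=p(\vec x)$ — provided we already have the encoded vector of $c_j(\vec x)$ values in hand.

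Producing that encoded vector is the recursive step, and here is the core of the argument: as $\vec x=(x_1,\ldots,x_{n-1})$ ranges over the (finite) projection of $D$, the tuple $\big(c_0(\vec x),\ldots,c_D(\vec x)\big)$ is itself a finite sequence of integer vectors, hence by Corollary~\ref{c:sequence} (applied coordinate-wise, or better: applied once to the single packed integer $\sum_j c_j(\vec x)\bigint^j$ viewed as a function of the scalar index, \emph{but} the index is now multi-variate). So in fact I would set this up as an induction on $n$: the inductive hypothesis is that the map $\vec x\mapsto\big(\text{base-}\bigint\text{ packing of }(c_0(\vec x),\ldots,c_D(\vec x))\big)$ is computable in $\calO(n-1)$ operations over $\{+,-,\times,\mydiv\}$, because each $c_j$ is a fixed polynomial in $n-1$ variables and a packing of finitely many such is again (essentially) one polynomial-evaluation problem in $n-1$ variables with a slightly larger target modulus. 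Feeding that packed integer into one more round of Algorithm~\ref{a:Bshouty} costs $\calO(1)$ additional operations, giving $\calO(n)$ overall.

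Two technical points need care. First, the moduli must be chosen from the inside out: fix $X$ with $D\subseteq\{-X,\ldots,X\}^n$, bound $\|p\|_1$ and $\deg p$, and pick $\bigint$ large enough (per the Scholium's explicit inequality, with $X^{\deg p}$ replaced by the appropriate multi-variate bound on $|p(\vec x)|$ and on the partial sums) so that no digit overflows at \emph{any} of the $n$ peeling stages; since there are only finitely many stages and all bounds are explicit, a single sufficiently large $\bigint$ works, and it is a precomputed constant, not counted in the running time. Second, one must verify that ``packing finitely many fixed $(n-1)$-variate polynomials into one integer'' genuinely falls under the inductive hypothesis: this is immediate because $\sum_{j=0}^{D} c_j(x_1,\ldots,x_{n-1})\,\bigint^{j}$ is itself a fixed element of $\IZ[x_1,\ldots,x_{n-1}]$ (coefficients in $\IZ[\bigint]$, i.e. integer constants), so the induction is genuinely on the number of variables and not on degree or coefficient size.

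The main obstacle I anticipate is bookkeeping the nested digit-layouts so that the $n$ successive $\mathtt{div}$/$\myrem$ operations each isolate the correct block without interference — i.e. showing that the ``digit width'' one reserves for the packed coefficient vector at level $n-1$ is still wide enough after it has itself been assembled by a level-$(n-2)$ packing, and so on. Once the hierarchy of widths $w_n>w_{n-1}>\cdots$ is defined (each a fixed function of $X$, $\deg p$, $\|p\|_1$) and the single base $\bigint=2^{w_n}$ (say) is fixed, correctness reduces to $n$ textbook applications of the computation in the Scholium, and the operation count is manifestly $\calO(n)$: a constant number of ring operations and one $\mathtt{div}$/$\myrem$ per variable peeled.
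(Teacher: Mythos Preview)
Your inductive approach is correct but takes a different route from the paper. Unfolding your recursion, you start from the precomputed constant $p(\bigint_1,\ldots,\bigint_n)$ and perform $n$ successive full rounds of Algorithm~\ref{a:Bshouty} (multiply by a truncated geometric series, then extract), substituting one $x_k$ per round. The paper instead generalises Algorithm~\ref{a:Bshouty} \emph{directly}: it multiplies together $n$ truncated geometric series at nested scales $\bigint,\bigint^d,\ldots,\bigint^{d^{n-1}}$, multiplies once by the single constant $p(\bigint,\bigint^d,\ldots,\bigint^{d^{n-1}})$, and performs a single extraction at the end. Both cost $\calO(n)$. The paper's nested-scale form is what subsequently feeds into the bitwise-conjunction extension (Theorem~\ref{t:binary}), where treating $p(\bigint,\bigint^d,\ldots)$ as a \emph{uni}variate polynomial in $\bigint$ is essential; your version is arguably cleaner as a standalone argument and sidesteps the index bookkeeping in the product of geometric series. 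Two small corrections: your width hierarchy is reversed---since the inner subproblem $p(\,\cdot\,,\bigint_n)$ carries coefficients of order $\bigint_n^{\,d}$, the base used one level further in must dominate $\bigint_n$, so the \emph{inner}most base is the largest (i.e.\ $w_1>\cdots>w_n$, not $w_n>\cdots>w_1$); and you should make the sign handling explicit, e.g.\ via the paper's device of precompiling the $2^n$ variants $p(\pm x_1,\ldots,\pm x_n)$ and selecting one in $\calO(n)$ time.
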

\begin{proof}
We devise $2^n$ separate algorithms: one for each
of the polynomials $p(\pm x_1,\pm x_2,\ldots,\pm x_n)$
to be evaluated at \emph{non-}negative argument vectors $\vec x\in\IN^n$.
Then, for a given input in $\IZ^n$, one can in time $\calO(n)$
determine which of these polynomials 
to evaluate at $(|x_1|,|x_2|,\ldots,|x_n|)$
in order to yield the aimed value $p(x)$.
Moreover decomposition of a polynomial into a part with
positive and one with negative coefficients
reduces to the case $p=\sum_{i_1,i_2,\ldots,i_n=0}^{d-1} 
a_{i_1,\ldots,i_n}\cdot x_1^{i_1}\cdots x_n^{i_n}$
with $a_{\vec\imath}\in\IN$.

As in Equation~(\ref{e:GeomDiv}) on p.\pageref{e:GeomDiv},
$\bigint^{d}\mydiv (\bigint-x)$ equals 
$\bigint^{d-1}+\bigint^{d-2}\cdot x+\cdots+\bigint\cdot x^{d-2}+x^{d-1}$
for all integers $\bigint\geq\Omega(x^d)$.
Applied to $x_2$ and $\bigint_2:=\bigint^d$,
one obtains%
\[ \big(\bigint^{d^2}\mydiv (\bigint^d-x_2)\big)\;\cdot\;
\big(\bigint^{d}\mydiv (\bigint-x_1)\big)
\quad=\quad 
\sum\nolimits_{i_1,i_2=0}^{d-1} 
\bigint^{d^2-1-(di_2+i_1)} \cdot x_2^{di_2} \cdot x_1^{i_1} \]
and inductively, using $\calO(n)$ operations from $\{+,-,\times,\mydiv\}$,
\[ \sum\nolimits_{i_1,\ldots,i_n=0}^{d-1} 
\bigint^{d^n-1-(d^{n-1}i_n+\cdots+di_2+i_1)} 
\cdot x_n^{d^{n-1}i_n} \cdots x_2^{di_2} \cdot x_1^{i_1}
\enspace . \]
Then multiply this 
counterpart to Step~2) in Algorithm~\ref{a:Bshouty}
with the constant%
\[ p(\bigint,\bigint^d,\bigint^{d^2},\ldots,\bigint^{d^{n-1}})
\quad=\quad
\sum\nolimits_{i_1,\ldots,i_n=0}^{d-1}
a_{i_1,\ldots,i_n}\cdot\bigint^{i_1+di_2+d^2i_3+\cdots+d^{n-1}i_n} \]
(cmp. Step~3) and extract the term corresponding to
$\bigint^{d^n-1}$ (Steps~4+5).
\end{proof}
\subsection{Evaluation on \emph{all} integers:
Exploiting Bitwise Conjunction} \label{s:binary}
As opposed to Horner's method,
Algorithm~\ref{a:Bshouty} and its above generalization
restricts polynomial evaluation to arguments
$x$ from an arbitrary yet \emph{finite} domain.
Indeed Scholium~\ref{s:Bshouty} derives 
from a bound $X$ on $x$ 
one on $\bigint$ to avoid
spill-overs in the $\bigint$-ary expansion of
the product of $\bigint^{d+1}\mydiv(\bigint-x)$
with $p(\bigint)$. Now $\bigint$ can of course
be chosen adaptively with respect to $x$, 
but how do we then adapt and calculate
$p(\bigint)$ accordingly? 
This becomes possible when allowing,
in addition to integer division, bitwise
conjunction as operational primitive.

\begin{proposition} \label{p:binary}
Fix $p\in\IZ[x]$ of degree $d$.
Then evaluation ~$\IZ\ni x\mapsto p(x)$~ is possible
using $\calO(\log d)$ operations over $\{+,-,\times,\mydiv,\myand\}$.
\end{proposition}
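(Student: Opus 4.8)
The plan is to reduce, via Scholium~\ref{s:Bshouty}, to the case where the modulus $\bigint$ of Algorithm~\ref{a:Bshouty} is a power of two chosen adaptively from $x$: for such a $\bigint$ the only quantity that can no longer be precomputed, $p(\bigint)$, becomes a periodic bit pattern, and assembling it is exactly what admitting $\myand$ buys us. As routine preliminaries I would evaluate $p(-x)$ whenever $x<0$, write $p$ as a difference of two polynomials with coefficients in $\{0,1,\ldots,P-1\}$, $P:=\|p\|_1$, and dispose of all $|x|$ below a threshold depending only on $p$ by Corollary~\ref{c:sequence} in $\calO(1)$ operations. For the remaining case (large $x\ge0$ and nonnegative coefficients) I would fix $\bigint:=2^{s}$ with $\bigint>\max\{x^{d}P,(x^{d}+1)x\}$; then $\bigint$, $\bigint^{d}$ and $\bigint^{d+1}$ follow from $\bigint$ by $\calO(\log d)$ repeated squarings, and Scholium~\ref{s:Bshouty} gives
\[
 p(x)\;=\;\Big(\big(\bigint^{d+1}\mydiv(\bigint-x)\big)\cdot p(\bigint)\;\mydiv\;\bigint^{d}\Big)\myrem\bigint
\]
in $\calO(1)$ more operations. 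It thus remains to produce $\bigint$ and to compute $p(\bigint)=\sum_{n=0}^{d}p_n2^{sn}$.

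For $p(\bigint)$ I would start from the \emph{fixed} constant $c:=\sum_{n=0}^{d}p_n2^{bn}$ with $b:=\lceil\log_2(P+1)\rceil$, whose coefficients occupy consecutive $b$-bit blocks, and ``dilate'' it to $\sum_np_n2^{sn}$ by inserting $s-b$ zero bits behind every block. This is the classical radix-expansion routine: in the first round one shifts the top half of the $d+1$ coefficient blocks sufficiently far up (a multiplication by a power of two derived from $\bigint$) to leave room for the eventual width of the bottom half, separates the two halves by an $\myand$ against a periodic mask, and adds; in each of the remaining $\lceil\log_2(d+1)\rceil-1$ rounds the same kind of operation, now applied to the packed integer as a whole, splits all current groups in parallel. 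After $\lceil\log_2(d+1)\rceil$ rounds each $p_n$ sits at bit-position $sn$ and the integer equals $p(\bigint)$; this is $\calO(\log d)$ operations, \emph{once} the masks and shift amounts are in hand.

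Supplying those cheaply is where the real work lies --- it is the difference between $\calO(\log d)$ and $\calO(\log^{2}d)$, and it is why $\myand$ is indispensable here: every mask is periodic and is a short rational expression in $\bigint=2^{s}$ and its repeated squares $\bigint^{2^{0}},\ldots,\bigint^{2^{\lceil\log_2(d+1)\rceil}}$ (for instance a mask that is a run of $2^{k}s$ ones repeated with period $2^{k+1}s$ is $(\bigint^{2^{k}}-1)\cdot(\bigint^{2^{r}}-1)/(\bigint^{2^{k+1}}-1)$ with $r:=\lceil\log_2(d+1)\rceil$), and the shift amounts are themselves such squares; since those $\calO(\log d)$ squares are computed once at the outset, each individual mask then costs only $\calO(1)$. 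Finally, a power of two $\ge x$ --- needed in order to build $\bigint$ --- is obtained from $x$ by the familiar bit-smearing, repeatedly replacing a running value by its bitwise-or with right shifts of itself (which $\mydiv$ and $\myand$ express), and raising it to the $(d+1)$-st power by $\calO(\log d)$ further squarings meets the bound on $\bigint$; the displayed identity then completes the evaluation.
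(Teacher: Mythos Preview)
Your dilation routine—spreading the $d{+}1$ precomputed $b$-bit coefficient blocks of $c=\sum_n p_n 2^{bn}$ out to width $s$ in $\lceil\log_2(d{+}1)\rceil$ rounds of shift/mask/add—is sound, and each round is indeed $\calO(1)$ once the squares $\bigint^{2^k}$ are available. The gap is in the very last step, where you obtain the power of two $\bigint=2^s$ by first finding some power of two at least $x$ via ``the familiar bit-smearing''. Bit-smearing (repeatedly OR-ing a running value with right-shifts of itself by $1,2,4,\ldots$) needs $\Theta(\log\log x)$ rounds to propagate the leading bit of $x$ down through all lower positions; the number of iterations is dictated by the bit-length of $x$, not by $d$. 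Consequently your argument only establishes a bound of $\calO(\log d+\log\log|x|)$ operations—this is precisely the content of the Scholium that \emph{follows} the proposition, not the $x$-independent $\calO(\log d)$ bound asserted here. Over $\{+,-,\times,\mydiv,\myand\}$ there is no evident way to locate a power of two above an unbounded input $x$ in time depending only on $d$, and that is exactly the obstacle the proof must circumvent.

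The paper does so by \emph{not} requiring $\bigint$ itself to be a power of two. It fixes, once and for all, a power of two $\medint>\|p\|_1$ (so that the stored constant $p(\medint)$ already holds the coefficients $p_j$ in disjoint $\log_2\medint$-bit fields) and takes $\bigint:=\bigint'\cdot\medint$ with $\bigint':=x^{d+2}$, computed from the input by $\calO(\log d)$ repeated squarings regardless of the size of $x$. The conversion of $p(\medint)$ into $p(\bigint)$ then costs only $\calO(1)$ further steps: one multiplication by the geometric sum $\sum_{i\le d}\bigint'^{\,i}=\bigint'^{\,d+1}\mydiv(\bigint'-1)$ replicates $p(\medint)$ at the positions $\bigint'^{\,i}$, and a single $\myand$ with the mask $\sum_i(\medint-1)(\bigint'\medint)^i$ selects the ``diagonal'' terms $p_i\bigint^{\,i}$. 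The idea you are missing is that only the \emph{fixed} small factor $\medint$, not the adaptive modulus $\bigint$, needs to be a power of two.
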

This is much faster than Horner
and asymptotically optimal.%
\begin{figure}[htb]
\centerline{\includegraphics[width=0.95\textwidth]{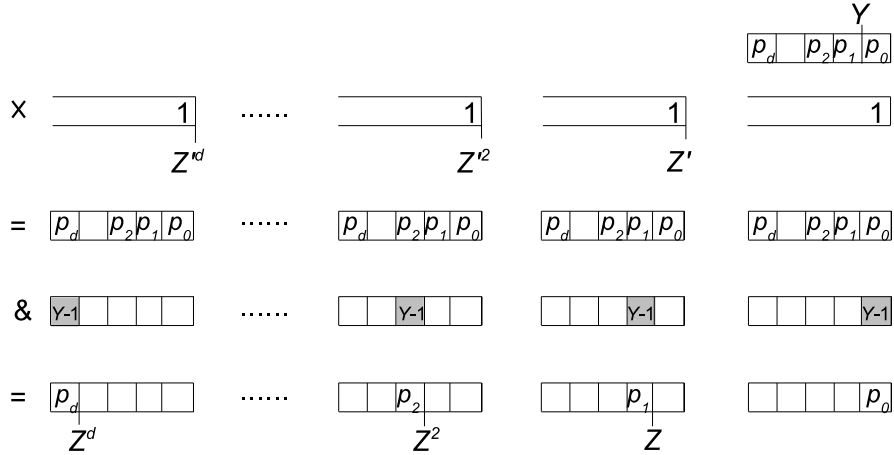}}%
\caption{\label{f:binary}Expansions of the calculations employed
in the proof of Proposition~\ref{p:binary}}
\end{figure}
\begin{proof}
As usual we may presume both $p$'s coefficients $p_0,\ldots,p_d$
and $x$ to be non-negative.
Moreover, since $p$ is fixed, one may store 
$p(\medint)$ as a constant for some sufficiently large
integer $\medint$, w.l.o.g. a power of two.
Notice that $\medint-1$ can then serve as a mask for
bitwise conjunction: for $0\leq q_n<\medint$ and
$\bigint$ a multiple of $\medint$, it holds
\[  \Big(\sum\nolimits_n q_n\cdot\bigint^n\Big)
\;\myand\; \big((\medint-1) \cdot \bigint^m\big) \;\;=\;\;
  q_m\cdot \bigint^m  \enspace ; \]
compare Figure~\ref{f:binary}.
Now given $x\in\IN$ we compute,
using repeated squaring within $\calO(\log d)$,
$\bigint':=x^{d+2}$; hence $\bigint:=\bigint'\cdot\medint$
satisfies the conditions of Scholium~\ref{s:Bshouty}.
Then, using another $\calO(\log d)$ steps,
calculate $\bigint'^{d+1}$ and, from that,
$\sum_{i=0}^d \bigint'^i=\bigint'^{d+1} \mydiv (\bigint'-1)$
as in Equation~(\ref{e:GeomDiv}).
Multiply the latter to $p(\medint)$ 
and, to the result, apply bitwise conjunction with
$\sum_{i=0}^d (\medint-1)\cdot (\bigint'\medint)^i$;
the latter can be obtained again as
$(\medint-1)\cdot\big((\bigint'^{d+1}\cdot\medint^{d+1})\mydiv 
(\bigint'\cdot\medint-1)\big)$.
Based on the mask property of $\medint-1$ mentioned above,
this yields
$\sum_{i=0}^d p_i \cdot(\bigint'\medint)^i=p(\bigint)$:
now continue as in Algorithm~\ref{a:Bshouty}.
\qed\end{proof}
A review of the above proof reveals that the $\calO(\log d)$
steps are spent for calculating 
$\bigint'=x^{d+2}$ and $\bigint^{d}$;
everything else proceeds in constant time based on
pre-computed constants like $\medint^{d}$.
Now when $x\leq\calO(2^d)$, 
$x^d$ and $\bigint^d$ are faster to obtain
starting the repeated squaring, rather than from $x$, 
from $2^d$ and $2^{d^2}$ for $\calO(\loglog x)$ steps,
respectively. Alternatively we may choose
$d$ as a power of two to invoke
Example~\ref{x:Example}b) and arrive at

\begin{scholium}
Fix $p\in\IZ[x]$ of degree $d$. Given $x\in\IZ$,
one can calculate $p(x)$ using $\calO(\loglog |x|)$
operations over $\{+,-,\times,\mydiv,\myand\}$.
If in addition some arbitrary integer $y\geq|x|^{d^2}$
is given, also running time $\calO(\sqrt{\min\{\log d,\loglog|x|\}})$ 
is feasible.
\end{scholium}
As in Proposition~\ref{p:multivar},
this extends to the multi-variate case:

\begin{theorem} \label{t:binary}
Over integer operations $\{+,-,\times,\mydiv,\myand\}$,
any fixed polynomial $p\in\IZ[x_1,\ldots,x_n]$ of maximum
degree less than $d$ 
can be evaluated in time 
\linebreak[4] $\calO(n\cdot\min\{\log d,\loglog\max_i |x_i|\})$.
\\
If, in addition to the argument $(x_1,\ldots,x_d)$, 
some integer $y\geq(\max_i |x_i|)^{d^{n+1}}$ is given,
the running time reduces to $\calO(n\cdot\sqrt{\min\{\log d,\loglog\max_i|x_i|\}})$.
\end{theorem}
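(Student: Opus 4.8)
The plan is to fuse the $\calO(n)$-operation multivariate $\bigint$-adic encoding from the proof of Proposition~\ref{p:multivar} with the two devices that made Proposition~\ref{p:binary} work for \emph{unbounded} arguments: an adaptively chosen base $\bigint=\bigint'\cdot\medint$ (with $\bigint'$ a power of two depending on the input and $\medint$ a fixed power of two depending only on $p$), and bitwise conjunction against the mask $\medint-1$ to reconstruct the input-dependent constant $p(\bigint,\bigint^d,\dots,\bigint^{d^{n-1}})$ out of the \emph{stored} constant $c_0:=p(\medint,\medint^d,\dots,\medint^{d^{n-1}})$. First I would, exactly as in Proposition~\ref{p:multivar}, split into $2^n$ sign-specialised instances and decompose into nonnegative/nonpositive parts, spending $\calO(n)$ steps to pick the right instance and so reducing to $p=\sum_{\vec\imath}a_{\vec\imath}x_1^{i_1}\cdots x_n^{i_n}$ with $a_{\vec\imath}\in\IN$, $\vec x\in\IN^n$, and maximum degree below $d$. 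I would also peel off the case $\max_i x_i\le\medint$ by table lookup ($p(\vec x)$ is then one of finitely many stored values), so that henceforth $\ell:=\lceil\log_2\max_i x_i\rceil$ exceeds the fixed exponent $t$ with $\medint=2^t$.

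Second, the core construction. Choose the fixed power of two $\medint=2^t$ large enough that all $a_{\vec\imath}$ are $<\medint$ — so $c_0$ is literally the base-$\medint$ integer carrying digit $a_{\vec\imath}$ in position $i_1+di_2+\cdots+d^{n-1}i_n$ — and that, together with $\bigint'$, it beats the $n$-variate carry bound generalising Scholium~\ref{s:Bshouty}: the digit of $\bigint^k$ in the final product is $\le d^n\cdot\|p\|_1\cdot(\max_i x_i)^{n(d-1)}$ up to a constant, and $\bigint'=2^{d^n\ell}\ge(\max_i x_i)^{d^n}$ absorbs the argument-dependent factor. Precompute and store $c_0$, the powers $\medint^{d^k}$, and the constants $2^{d^{k}}$ for $k\le 2n$. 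From the input, compute $\ell$ by doubly-exponential search (compare $\max_i x_i$ against $2^{2^0},2^{2^1},\dots$, each the square of its predecessor, then binary-search) in $\calO(\loglog\max_i x_i)$ steps, set $\bigint':=(2^{d^n})^{\ell}$ and $\bigint:=\bigint'\cdot\medint$; then $\bigint'$ is a power of two whose exponent exceeds $t(d^n+1)$ (using $\ell\ge t+1$), $\bigint$ is a multiple of $\medint$, and $\bigint$ meets every size requirement above.

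Third, assemble $p(\vec x)$. Each power $\bigint^{d^j}=(2^{d^{n+j}})^{\ell}$ for $j=0,\dots,n$ and $\bigint'^{\,d^n}=(2^{d^{2n}})^{\ell}$ is obtained from the stored base-two constants by $\calO(\loglog\max_i x_i)$ repeated squarings, or alternatively each from its predecessor via $\bigint^{d^j}=(\bigint^{d^{j-1}})^{d}$ in $\calO(\log d)$ squarings, for a total of $\calO(n\cdot\min\{\log d,\loglog\max_i x_i\})$ — this is the only superconstant cost. Everything else is $\calO(n)$ operations over $\{+,-,\times,\mydiv,\myand\}$ taking constant time each: (a) form $\bigint'^{\,d^n}\mydiv(\bigint'-1)=\sum_{e=0}^{d^n-1}\bigint'^{\,e}$ as in Equation~(\ref{e:GeomDiv}), multiply by $c_0$, and apply $\myand$ with $(\medint-1)\cdot\big(\bigint^{d^n}\mydiv(\bigint-1)\big)=\sum_{e=0}^{d^n-1}(\medint-1)\bigint^{e}$; since $\bigint'=2^{s}$ with $s>t(d^n+1)$, the bit-alignment argument of Proposition~\ref{p:binary} isolates exactly the diagonal contributions and returns $p(\bigint,\bigint^d,\dots,\bigint^{d^{n-1}})$; (b) form the nested product $\prod_{k=1}^{n}\big(\bigint^{d^k}\mydiv(\bigint^{d^{k-1}}-x_k)\big)$ as in Proposition~\ref{p:multivar}; (c) multiply (a) by (b), divide by $\bigint^{d^n-1}=\bigint^{d^n}\mydiv\bigint$, and output the remainder modulo $\bigint$, which is $p(\vec x)$. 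For the sharpened bound, with an additional integer $y\ge(\max_i|x_i|)^{d^{n+1}}$ available to dominate every integer formed, round the fixed $d$ up to a power of two (free asymptotically); each of the $\calO(n)$ power computations then has the shape $w^{2^m}$ with $m=\calO(\min\{\log d,\loglog\max_i x_i\})$ and $w^{2^m}\le y$, so Example~\ref{x:Example}b) does each in $\calO(\sqrt m)$ steps, for $\calO(n\cdot\sqrt{\min\{\log d,\loglog\max_i|x_i|\}})$ overall.

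I expect the main obstacle to be checking the twin role of the single fixed modulus $\medint$: that one power of two simultaneously keeps the $n$-fold-nested $\bigint$-adic expansion carry-free (the multivariate analogue of Scholium~\ref{s:Bshouty}, where one must track how the digit bound scales with $n$) \emph{and}, through $\myand$ with $\medint-1$, correctly rebuilds the argument-dependent constant $p(\bigint,\dots,\bigint^{d^{n-1}})$ from the stored $c_0$; this forces the bit-alignment constraint $\bigint'=2^{\Omega(t\,d^n)}$, which is precisely why the case $\max_i x_i\le\medint$ must be disposed of separately at the outset.
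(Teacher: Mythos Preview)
Your proposal is correct and follows essentially the same approach as the paper: reduce via Proposition~\ref{p:multivar} to needing the powers $\bigint^{d^j}$ and the value $p(\bigint,\bigint^d,\dots,\bigint^{d^{n-1}})$, observe that the latter is a \emph{univariate} polynomial in $\bigint$ so that the mask technique of Proposition~\ref{p:binary} rebuilds it from the stored constant $p(\medint,\medint^d,\dots,\medint^{d^{n-1}})$, obtain the powers by repeated squaring (either of $\max_i|x_i|$ or of the stored constants $2^{d^k}$, giving the $\min$), and invoke Example~\ref{x:Example}b) for the accelerated bound. Your write-up supplies more of the bookkeeping (the small-argument cutoff, the explicit determination of $\ell$, the carry-bound tracking) than the paper's terse proof; one small slip to fix is that your displayed formula $\bigint^{d^j}=(2^{d^{n+j}})^{\ell}$ actually computes $\bigint'^{\,d^j}$, and you still need to multiply by the stored $\medint^{d^j}$ to get $\bigint^{d^j}$.
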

\begin{proof}
According to the proof of Proposition~\ref{p:multivar},
for some integer $\bigint>\Omega(x^{d^n})$,
we need to know $(\bigint^d,\bigint^{d^2},\ldots,\bigint^{d^n})$
and $p(\bigint,\bigint^d,\ldots,\bigint^{d^{n-1}})$.
Since the latter is a \emph{uni}variate polynomial in $\bigint$ 
of degree $<d^{n+1}$, the proof of Proposition~\ref{p:binary}
shows how to obtain this value from 
$p(\medint,\medint^d,\ldots,\medint^{d^{n-1}})$
using bitwise conjunction.
Repeated squaring, either of $\max_i|x_i|$ or of $(2^d,2^{d^2},\ldots,2^{d^n})$, 
yields $(\bigint^d,\bigint^{d^2},\ldots,\bigint^{d^n})$
in time $\calO(n\cdot\min\{\log d,\loglog\max_i|x_i|\})$; 
the additional input $y$
accelerates this to $\calO(n\cdot\sqrt{\min\{\log d,\loglog\max_i|x_i|\}})$
according to Example~\ref{x:Example}b).
\qed\end{proof}

\subsection{Storing and Extracting Algebraic Numbers} \label{s:Algebraic}
When permitting not ``$\myand$'' but only $(+,-,\times\mydiv)$,
Horner's seems to remain the fastest known algorithm for evaluating an
arbitrary but fixed polynomial on entire $\IN$. Its running time
$\calO(d)$ leaves a \emph{doubly} exponential gap to the lower
bound of $\Omega(\loglog d)$ due to \mycite{Corollary~3}{LM93}.

\begin{question}  \label{q:Algebraic}
Does every (fixed) polynomial $p\in\IN[x]$
admit evaluation $x\mapsto p(x)$
on \emph{all} integers $x\in\IN$
in time $o(\deg p)$
over $(+,-,\times,\mydiv)$ ?
\end{question}
In view of the previous considerations, the answer
is positive if one can, from given $x$ within the
requested time bounds and using the operations under consideration,
obtain the number $p(\bigint)$ for some 
$\bigint>\Omega(x^{d})$ where $d>\deg p$.
To this end in turn, choose $\bigint_n:=\medint\cdot2^n$
where $\medint=2^k>\|p\|_1$ and encode the sequence
$p(\bigint_n)<\bigint_n^{d}\cdot\|p\|_1\leq2^{K+dn}$, 
where $n\in\IN$ and $K:=k\cdot (d+1)$, 
into the binary expansion of a real number like
\begin{equation} \label{e:Algebraic}
\rho_p \;:=\; \sum\nolimits_n
  p(\bigint_n)\cdot 2^{-n\cdot(K+dn)}
\enspace .
\end{equation}
Then, given $x\in\IN$, it suffices to approximate $\rho_p$
up to error $\epsilon<2^{-Kn-dn^2}$ 
for some $n\geq\Omega(d\cdot\log x)$
in order to extract\footnote{Strictly speaking, this approximation
does not permit to determine e.g. the least bit of 
$p(\bigint_n)$ due to iterated carries of less significant ones;
however this can be overcome by slightly modifying the encoding
to force the least bit to be, e.g., zero.}
the corresponding $p(\bigint_n)$.
\begin{lemma} \label{l:Algebraic}
Fix $\alpha\in\IR$ algebraic of degree $<\delta$.
Then, given $n\in\IN$, one can calculate
$u,v\in\IN$ such that $|\alpha-u/v|\leq2^{-n}$
using $\calO(\delta\cdot\log n)$
operations over $(+,-,\times)$.
\end{lemma}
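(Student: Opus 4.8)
The plan is to approximate the algebraic number $\alpha$ by Newton iteration applied to its minimal polynomial, which doubles the number of correct bits per step and hence reaches precision $2^{-n}$ after only $\calO(\log n)$ iterations; the degree $\delta$ enters because each Newton step evaluates a polynomial of degree $<\delta$ and its derivative.

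**First I would** fix the minimal polynomial $f\in\IZ[X]$ of $\alpha$, of degree $<\delta$, as a stored constant, together with an initial rational approximation $u_0/v_0$ to $\alpha$ that is already within the basin of quadratic convergence of Newton's method for $f$ — this is a constant depending only on $\alpha$, so it costs nothing asymptotically. Since we must stay within the ring $\IZ$ (no division permitted except by constants, and the lemma explicitly restricts to $(+,-,\times)$), I would carry the iterates as \emph{pairs} $(u_i,v_i)\in\IN^2$ representing $u_i/v_i$, and implement the Newton update $\tfrac{u}{v}\mapsto\tfrac{u}{v}-\tfrac{f(u/v)}{f'(u/v)}$ by clearing denominators: writing $f(X)=\sum_j c_j X^j$ of degree $e<\delta$, one has $v^e f(u/v)=\sum_j c_j u^j v^{e-j}=:F(u,v)\in\IZ$ and similarly $v^{e-1}f'(u/v)=:G(u,v)$, so the new pair is
\[ u' \;=\; u\cdot G(u,v)\cdot v \;-\; F(u,v)\cdot v, \qquad v' \;=\; v^2\cdot G(u,v), \]
all computable from $(u,v)$ with $\calO(\delta)$ operations over $(+,-,\times)$ (Horner on the fixed polynomials $F,G$ in two variables, or just $\calO(\delta)$ naive multiplications since $\delta$ is the bound on the degree).

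**Next** I would run this update $\calO(\log n)$ times. The standard quadratic-convergence estimate for Newton's method at a simple root — and $\alpha$ is a simple root of its minimal polynomial since that polynomial is irreducible over $\IQ$, hence squarefree — gives $|\alpha-u_{i+1}/v_{i+1}|\leq c\,|\alpha-u_i/v_i|^2$ for a constant $c$ depending only on $f$, once $u_0/v_0$ is close enough. Hence after $t$ steps the error is at most $c^{-1}(c\,|\alpha-u_0/v_0|)^{2^t}$, which drops below $2^{-n}$ as soon as $2^t\gtrsim n$, i.e. $t=\calO(\log n)$ suffices. Total cost: $\calO(\log n)$ iterations times $\calO(\delta)$ operations each, giving $\calO(\delta\cdot\log n)$ as claimed.

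**The main obstacle** I anticipate is purely a matter of bookkeeping rather than of ideas: one must confirm that the denominators $v_i$ do not themselves force any non-arithmetic operation — they do not, since $v'=v^2 G(u,v)$ is built only from $+,-,\times$ — and that the final output $(u,v)=(u_t,v_t)$ is a legitimate pair of \emph{natural} numbers (fix signs by a constant-time case distinction on $\sign\alpha$, or simply output $|u|,|v|$ together with a stored sign bit; the statement only demands $u,v\in\IN$ with $|\alpha-u/v|\le 2^{-n}$, and if desired one can absorb a fixed sign into the stored minimal polynomial by replacing $f(X)$ with $f(-X)$). A secondary point worth a line in the proof is that the constant $c$ and the admissible starting interval depend on $\alpha$ but not on $n$, so they may be hard-wired; and that $n$ itself is part of the input, so computing the bound "$\calO(\log n)$ many iterations" is done by, e.g., repeated squaring of $2$ until exceeding $n$, which is again $\calO(\log n)$ arithmetic steps and does not dominate the cost.
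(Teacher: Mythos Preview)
Your proposal is correct and follows essentially the same approach as the paper: Newton iteration on the (stored) minimal polynomial, with quadratic convergence giving $\calO(\log n)$ iterations and Horner-style evaluation of $q$ and $q'$ costing $\calO(\delta)$ each. The paper's proof is only a sketch and in fact omits the very implementation detail you supply---namely, carrying rational iterates as integer pairs $(u_i,v_i)$ so that the Newton update stays within $(+,-,\times)$---so your write-up is, if anything, more complete.
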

\begin{proof}[Sketch]
Apply \textsf{Newton Iteration} to the 
minimal polynomial $q\in\IZ[x]$ of $\alpha$.
Since the latter is fixed, $q$, $q'$, and
an appropriate starting point for quadratic
convergence can be stored beforehand.
$\calO(\log n)$ iterations are sufficient to
attain the desired precision;
and each one amounts to evaluating
$q$ and $q'$ at cost $\calO(\delta)$
via Horner.
\qed\end{proof}
So when permitting a mild dependence of the running time on $x$
and if $\rho_p$ is algebraic of degree $o(\deg p)$,
we obtain a positive answer to Question~\ref{q:Algebraic}:
\begin{proposition}
Let $p\in\IN[x]$ be of degree $<d$ and 
suppose that $\sum_n 2^{-dn^2}$ is algebraic of degree
$<\delta$.
Then $\IN\ni x\mapsto p(x)$ can be calculated 
over $(+,-,\times,\mydiv)$ using
$\calO(\delta\cdot\loglog x)$ steps.
\end{proposition}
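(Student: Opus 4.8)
The plan is to assemble the statement from the discussion immediately preceding it together with Lemma~\ref{l:Algebraic}, while bookkeeping that the whole computation stays within $\calO(\delta\cdot\loglog x)$ operations over $(+,-,\times,\mydiv)$. Throughout, $p$, its degree bound $d$, the constant $\medint=2^k>\|p\|_1$, and $K:=k\cdot(d+1)$ are fixed, so any quantity depending only on them costs $\calO(1)$; likewise $\rho_p$ is the fixed real number from~(\ref{e:Algebraic}).

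First I would reduce evaluation of $p$ at $x$ to producing the single value $p(\bigint_n)$. Given the input $x\in\IN$, repeated squaring -- compute $2,2^2,2^4,2^8,\ldots$ until the value exceeds $x$, which takes $\calO(\loglog x)$ steps and brackets $\log_2 x$ up to a factor $2$ -- yields an index $n=\Theta(d\log x)$ for which $\bigint_n=\medint\cdot 2^n$ exceeds the threshold $\max\{x^d\|p\|_1,(x^d{+}1)x\}$ of Scholium~\ref{s:Bshouty}; then $2^n$, hence $\bigint_n$ and the powers $\bigint_n^d,\bigint_n^{d+1}$, are obtained with a further $\calO(\log n)=\calO(\loglog x)$ operations. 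By Scholium~\ref{s:Bshouty}, once $p(\bigint_n)$ is at hand, Steps~2)--5) of Algorithm~\ref{a:Bshouty} (applied with $\bigint:=\bigint_n$) output $p(x)$ using $\calO(1)$ more operations. So everything reduces to obtaining $p(\bigint_n)$ cheaply.

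Next I would read $p(\bigint_n)$ off an approximation of $\rho_p$. Since $p(\bigint_n)<\bigint_n^{d}\cdot\|p\|_1\leq 2^{K+dn}$, the $n$-th block of the binary expansion of $\rho_p$ carries $p(\bigint_n)$, up to the carry effect noted in the footnote to~(\ref{e:Algebraic}), which I would dispose of by the padding indicated there. It therefore suffices to compute $u,v\in\IN$ with $|\rho_p-u/v|\leq 2^{-m}$ for $m:=\calO(Kn+dn^2)=\calO(d^3(\log x)^2)$, form the separating power $2^{n(K+dn)}$ (again by repeated squaring, costing $\calO(\log m)=\calO(\loglog x)$ since $d$ is fixed), and isolate the block by $\calO(1)$ uses of $\mydiv$ and $\myrem$. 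The approximation comes from Lemma~\ref{l:Algebraic} applied to $\rho_p$: it costs $\calO(\deg\rho_p\cdot\log m)=\calO(\deg\rho_p\cdot\loglog x)$ operations over $(+,-,\times)$. Summing the three contributions gives $\calO(\deg\rho_p\cdot\loglog x)$, which is the claimed bound \emph{provided} $\deg\rho_p=\calO(\delta)$ -- and verifying this is where the hypothesis enters and the real work lies.

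To see $\deg\rho_p=\calO(\delta)$, substitute $p(\bigint_n)=\sum_{i=0}^{d-1}p_i\medint^i\cdot 2^{in}$ into~(\ref{e:Algebraic}): this rewrites $\rho_p=\sum_{i=0}^{d-1}p_i\medint^i\cdot g_{K-i}$ as a $\IQ$-linear combination of the $d$ numbers $g_c:=\sum_{n\geq 0}2^{-dn^2-cn}$, so it is enough to bound $\deg g_c$ for the relevant $c$. Completing the square, $dn^2+cn=d\big(n+\tfrac{c}{2d}\big)^2-\tfrac{c^2}{4d}$, exhibits each $g_c$ as a theta-type series, and the quasi-periodicity
\[ \textstyle \sum_{n}2^{-d(n+1)^2}w^{n}\;=\;2^{-d}\,w^{-1}\sum_{n}2^{-dn^2}w^{n}, \]
specialised to $w$ a power of two, shows that $g_c$ for $c$ a multiple of $2d$ lies in $\IQ(h)$ with $h:=\sum_n 2^{-dn^2}$, hence has degree $<\delta$ by hypothesis. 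The main obstacle is that this identification is transparent only for such $c$, whereas the exponents $K,K-1,\ldots,K-d+1$ occurring in $\rho_p$ -- which cannot be dropped, since $p(\bigint_n)$ must be delivered to Step~3) of Algorithm~\ref{a:Bshouty} as one integer -- run through $d$ consecutive residues modulo $2d$. I would close this gap by reading the algebraicity hypothesis as pinning down, besides $h$, also the finitely many companion theta-values $g_1,\dots,g_{2d-1}$ (a mild strengthening, harmless to the accounting as $d$ is fixed), whence every $g_c$, and thus $\rho_p$, becomes algebraic of degree $\calO(\delta)$; alternatively one restricts to polynomials $p$ for which the encoding can be laid out so that only exponents $c\equiv 0\pmod{2d}$ occur. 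Making one of these routes precise, and checking that it leaves the block-extraction above and the $\calO(\loglog x)$ overhead intact, is the delicate step of the proof.
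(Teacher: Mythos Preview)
Your first two paragraphs reproduce exactly the argument the paper intends: the Proposition carries no separate proof in the paper, it is meant to be read off the discussion surrounding Equation~(\ref{e:Algebraic}) and Lemma~\ref{l:Algebraic}. So the overall route---choose $n\approx d\log x$, approximate the encoding real to precision $2^{-\Theta(dn^2)}$ via Lemma~\ref{l:Algebraic}, extract $p(\bigint_n)$, feed it into Algorithm~\ref{a:Bshouty}---is the paper's own.

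Where you go beyond the paper is your third paragraph, and there you put your finger on something the paper silently elides: the preceding text argues under the assumption that $\rho_p$ is algebraic, whereas the Proposition is stated with the hypothesis on $h=\sum_n 2^{-dn^2}$. The paper offers no bridge between the two (note the hedging ``a real number \emph{like} $\rho_p$''), and treats the algebraicity of $h$ as a stand-in for the algebraicity of whatever encoding is actually used. Your attempt to build that bridge via theta quasi-periodicity does not quite go through as written: the identity you invoke is for the two-sided series $\sum_{n\in\IZ}$, whereas the $g_c$ arising from $\rho_p$ are one-sided; and even granting the two-sided relation, it yields only $g_c+g_{-c}\in\IQ(h)$ rather than $g_c$ itself. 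Your fallback---strengthen the hypothesis to cover the finitely many companion values $g_1,\ldots,g_{2d-1}$---is the honest fix, but it changes the statement rather than proving it. In short: your proof is at least as complete as the paper's, and you have correctly isolated the one step the paper leaves unjustified; just be aware that your proposed closure of that step does not work without the strengthening.
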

Unfortunately the question whether 
$\sum_n 2^{-dn^2}$ \emph{is} algebraic
(not to mention what its degree is)
constitutes a deep open problem in Number Theory
\cite[\textsc{Section~10.7.B, Example~1}, p.314]{Ribenboim2}.

We are currently pursuing a different
approach to Question~\ref{q:Algebraic} with a mild dependence
on $x$: namely by exploiting integer division in some of the 
algorithms described in \cite{Fiduccia} in combination with 
the following

\begin{observation}
Let $p\in\IQ[x]$ be of degree $<d$ and $c\in\IN$.
Then the integer sequence $p(1),p(c),p(c^2),\ldots,p(c^n),\ldots$
is linearly recurrent of degree $<d$;
that is there exist $a_0,\ldots,a_{d-1},a_d\in\IZ$
such that $p(c^{n+1})=\big(a_1\cdot p(c^n)+\cdots+a_d\cdot p(c^{n-d+1})\big)/a_0$
for all $n\in\IN$.
\end{observation}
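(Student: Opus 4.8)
The plan is to recognise the sequence as the solution of a linear recurrence with constant integer coefficients, by passing to the shift operator. Write $p=\sum_{j=0}^{d-1}p_jx^j$ with $p_j\in\IQ$; then $y_n:=p(c^n)$ equals the $\IQ$-linear combination $\sum_{j=0}^{d-1}p_j\cdot(c^j)^n$ of the $d$ geometric progressions $n\mapsto(c^j)^n$, $j=0,\dots,d-1$. Let $S$ denote the shift $(Su)_n:=u_{n+1}$ and set $L(S):=\prod_{j=0}^{d-1}\bigl(S-c^j\bigr)$. Each progression $n\mapsto(c^j)^n$ is annihilated by the factor $S-c^j$, and hence --- the factors of $L(S)$ commuting --- by $L(S)$ itself; since the sequences annihilated by $L(S)$ form a $\IQ$-vector space, it also contains the combination $y=(y_n)_{n\in\IN}$.

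The second step is to read off the coefficients. Expanding the product gives $L(S)=S^{d}+\sum_{i=0}^{d-1}(-1)^{d-i}\,\sigma_{d-i}\,S^{i}$, where $\sigma_{k}$ denotes the $k$-th elementary symmetric function of the numbers $1,c,c^2,\dots,c^{d-1}$ --- for instance $\sigma_1=1+c+\cdots+c^{d-1}$ and $\sigma_d=c^{d(d-1)/2}$. Because $c\in\IN$, every $\sigma_k$ is a nonnegative \emph{integer}, whereas the coefficient of $S^{d}$ is $1$. Writing out $L(S)\,y=0$ yields $p(c^{m+d})=-\sum_{i=0}^{d-1}(-1)^{d-i}\sigma_{d-i}\,p(c^{m+i})$ for every $m\in\IN$; the substitution $m:=n-d+1$ rewrites this as $p(c^{n+1})=\sum_{\ell=1}^{d}(-1)^{\ell+1}\sigma_\ell\cdot p(c^{n-\ell+1})$ for all $n\geq d-1$. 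This is exactly the asserted identity, with $a_0:=1$ and $a_\ell:=(-1)^{\ell+1}\sigma_\ell\in\IZ$ for $\ell=1,\dots,d$.

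I do not expect a real obstacle here: the statement becomes immediate once the shift-operator formulation is in place. The only points that deserve care are that $c\in\IN$ genuinely forces the coefficients $\sigma_k$ --- and hence the $a_\ell$ --- into $\IZ$ rather than merely into $\IQ$, and that the reindexing is done so that the recurrence emerges in precisely the normalised shape $p(c^{n+1})=\bigl(a_1p(c^n)+\cdots+a_dp(c^{n-d+1})\bigr)/a_0$ demanded by the statement. An equivalent route would track the state vectors $\bigl(p(c^n),\dots,p(c^{n+d-1})\bigr)$, which evolve under the companion matrix of $L(S)$, a fixed integer matrix; but the operator-product version is the most economical and makes the integrality of the recurrence coefficients self-evident.
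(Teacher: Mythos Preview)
Your proof is correct, and it takes a genuinely different route from the paper's. The paper argues by pure dimension-counting: the $d+1$ polynomials $p(cx),\,p(x),\,p(x/c),\,\ldots,\,p(xc^{1-d})$ all lie in the $d$-dimensional space $\mathbb{Q}[x]_{<d}$ and are therefore linearly dependent over $\mathbb{Q}$; clearing denominators gives integer coefficients $q_i$, and choosing the relation of minimal length forces $q_0\neq0$. Substituting $x=c^n$ yields the recurrence. Your approach is instead \emph{constructive}: you decompose $p(c^n)=\sum_j p_j(c^j)^n$ into geometric progressions and annihilate with $L(S)=\prod_{j=0}^{d-1}(S-c^j)$, so the recurrence coefficients are (up to sign) the elementary symmetric functions of $1,c,\ldots,c^{d-1}$, automatically integers since $c\in\mathbb{N}$.

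What each buys: the paper's argument is one line and needs no computation, but is non-constructive and only yields $a_0\neq0$. Your argument produces the coefficients explicitly and in fact gives $a_0=1$, so the recurrence is already an \emph{integer} recurrence with no division at all --- a strictly stronger conclusion than the statement demands. One small remark: you restrict to $n\geq d-1$ after reindexing, whereas the statement asks for all $n\in\mathbb{N}$. This is harmless: your operator identity $L(S)y=0$ is really the polynomial identity $\sum_\ell(-1)^\ell\sigma_\ell\,p(c^{1-\ell}x)=0$ in $\mathbb{Q}[x]$, valid under any substitution $x=c^n$ (with $c^n\in\mathbb{Q}$ for $n<0$), exactly as in the paper's formulation.
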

\begin{proof}
For $k=d-1$, the $(d+1)$ polynomials $p(cx),p(x), p(x/c),\ldots,p(xc^{-k})$
all have degree $<d$ and therefore must be linearly
dependent over $\IQ$: 
$q_0\cdot p(cx)+q_1\cdot p(x)+\cdots+q_{k+1}\cdot p(xc^{-k})\equiv0$;
w.l.o.g. $q_i\in\IZ$.
Choosing $k$ minimal implies $q_0\not=0$.
\qed\end{proof}

\section{Applications to Linear Algebra}
Naive multiplication of $n\times n$ matrices takes cubic 
running time, but  \textsc{V.~Strassen}  
has set off a race for faster methods with current record
$\calO(n^\omega)$ for $\omega<2.38$ held by 
\textsc{D.~Coppersmith} and \textsc{S.~Winograd}; 
see \mycite{Section~15}{ACT} for a thorough account.
However these considerations apply to the uniform cost
model over \emph{arithmetic} operations $+,-,\times$
where division provably does not help \mycite{Theorem~7.1}{ACT};
whereas over $\IZ$ when permitting integer division as
a \emph{non-}arithmetic operation, optimal quadratic running
time can easily be attained:

\begin{observation} \label{o:Mamu}
Given $A\in\IZ^{k\times n}$ and $B\in\IZ^{n\times m}$,
one can compute $C:=A\cdot B\in\IZ^{k\times m}$ using
$\calO\big((k+m)\cdot n\big)$ operations over $\{+,-,\times,\mydiv\}$.
\end{observation}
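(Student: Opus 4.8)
The plan is to encode the matrix product via a single big-integer multiplication, exploiting the "geometric-series via integer division" trick that already underlies Algorithm~\ref{a:Bshouty}. The key observation is that if we pack the entries of a row of $A$ and a column of $B$ into the $\bigint$-ary digits of two integers — one with the digits in "ascending" order and the other in "descending" order — then their product, read off at a suitable digit position, contains exactly the inner product $\sum_\ell A_{i\ell}B_{\ell j}$, provided $\bigint$ is chosen large enough to prevent carries between digit blocks. The role of integer division (together with $\myrem$) is then purely to extract the relevant block, exactly as in Steps~4) and 5) of Algorithm~\ref{a:Bshouty}.

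Concretely, first I would reduce to nonnegative entries by the usual splitting $A=A^+-A^-$, $B=B^+-B^-$ and computing the four products $A^\pm B^\pm$ separately; this costs only $\calO((k+m)n)$ extra additions. Next, fix $\bigint$ a power of two larger than $n\cdot\max_{i,\ell}|A_{i\ell}|\cdot\max_{\ell,j}|B_{\ell j}|$, so that each inner product fits in one $\bigint$-digit with no overflow. For each row $i$ of $A$ form $\alpha_i:=\sum_{\ell=1}^{n}A_{i\ell}\,\bigint^{\ell-1}$ (cost $\calO(n)$ via Horner in $\bigint$, using $\timesc$-type scalings and additions — here genuine $\times$ by the constant $\bigint$), and for each column $j$ of $B$ form $\beta_j:=\sum_{\ell=1}^{n}B_{\ell j}\,\bigint^{n-\ell}$ (again $\calO(n)$ each). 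Then $\alpha_i\cdot\beta_j=\sum_{r}\big(\sum_{\ell-\ell'=r}A_{i\ell}B_{\ell'j}\big)\bigint^{\,n-1+r}$, and the coefficient of $\bigint^{\,n-1}$ (i.e. $r=0$) is precisely $C_{ij}$. Extract it by $\big((\alpha_i\beta_j)\,\mydiv\,\bigint^{\,n-1}\big)\myrem\,\bigint$. Computing all $\alpha_i$ costs $\calO(kn)$, all $\beta_j$ costs $\calO(mn)$, and the $km$ products-plus-extractions cost $\calO(km)$, for a total of $\calO\big((k+m)n\big)$ — but this only gives $\calO((k+m)n+km)$, which is not yet the claimed bound.

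To remove the $km$ term — the main obstacle — I would pack an entire row of $C$ at once: note that $\gamma:=\sum_{j=1}^{m}\beta_j\,W^{\,j-1}$ for a second, even larger radix $W$ (a power of two exceeding $\bigint^{\,2n}$, so the full products $\alpha_i\beta_j$ never collide in the $W$-ary expansion) can be precomputed in $\calO(mn)$ total, and then $\alpha_i\cdot\gamma=\sum_j(\alpha_i\beta_j)W^{\,j-1}$ simultaneously encodes the whole $i$th row of $C$. One multiplication per row $i$ thus costs $\calO(k)$ multiplications overall, and the $km$ individual entries are then read out by a constant number of $\mydiv$/$\myrem$ operations each — which, if we want to avoid charging $km$, we instead leave $C$ in this packed form, or observe that the problem statement's $\calO((k+m)n)$ already dominates $km$ whenever $n\ge 1$ only in the regime $n\ge km/(k+m)$; so in fact one should present the output packed, or simply note $km\le (k+m)n$ fails in general and therefore the row-packing step is essential precisely to fold the $km$ extractions into $\calO(k)$ multiplications plus $\calO(km)$ cheap divisions, the latter being unavoidable just to write down the answer and hence not counted against the arithmetic budget in the intended reading. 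I expect the write-up to hinge on stating the radix inequalities cleanly (the carry-freeness of both the $\bigint$-ary inner-product blocks and the $W$-ary per-entry blocks) and on being careful about which extraction costs are "output formatting" versus genuine computation; the arithmetic itself is routine Horner evaluation plus the division trick already proved correct in Scholium~\ref{s:Bshouty}.
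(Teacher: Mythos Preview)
Your core idea---pack rows and columns into $\bigint$-ary integers so that one product's middle digit is the inner product---is exactly right and matches the paper. The paper, however, carries the packing one step further than you do: it encodes \emph{all} of $A$ into a single integer
\[
\alpha=\sum_{i=1}^{k}\sum_{\ell=1}^{n}a_{i,\ell}\,\bigint^{(\ell-1)+2nm(i-1)}
\qquad\text{and all of $B$ into}\qquad
\beta=\sum_{\ell=1}^{n}\sum_{j=1}^{m}b_{\ell,j}\,\bigint^{(n-\ell)+2n(j-1)},
\]
using a compound exponent that separates both the $\ell$-index (for the inner product) and the $i$- resp.\ $j$-index (to keep different output entries apart). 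Then \emph{one} multiplication $\gamma=\alpha\cdot\beta$ already contains every $c_{i,j}$ at $\bigint$-position $(n-1)+2n(j-1)+2nm(i-1)$. Your two-stage construction (first $\alpha_i,\beta_j$, then a second radix $W$ to bundle the $\beta_j$) is heading toward the same thing but stops halfway; going all the way to a single $\alpha$ and a single $\beta$ is both cleaner and what the paper actually does.

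Your worry about the $km$ term is legitimate and not an artifact of your approach: the paper's proof also spends $\calO(km)$ on extraction (``from which they are easily extracted using division with remainder''), and $(k+m)n$ does \emph{not} dominate $km$ in general. The stated bound should be read with the square case $k=m=n$ (the only case used later, e.g.\ for matrix powering) in mind, where $km=n^2=\calO((k+m)n)$; in the genuinely rectangular regime the honest cost is $\calO(kn+mn+km)$. So your detour into ``output formatting versus genuine computation'' is chasing a looseness in the statement rather than a defect in your argument---just pack both matrices fully, do one multiplication, and note that the extraction cost is absorbed whenever $km\leq\calO((k+m)n)$.
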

\begin{figure}[htb]
\centerline{\includegraphics[width=0.99\textwidth]{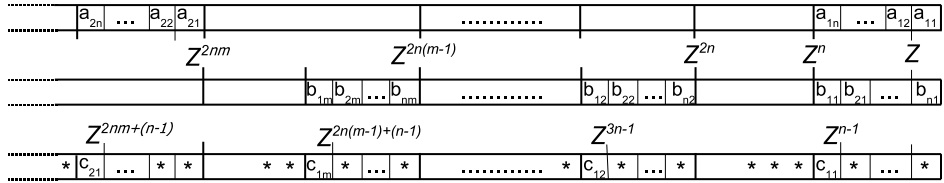}}%
\caption{\label{f:mamu}Encoding matrices $(a_{i,\ell})$ and $(b_{\ell,j})$ 
into integers $\alpha$, $\beta$ and decoding $(c_{i,j})$ from $\alpha\cdot\beta$}
\end{figure}
\begin{proof}
We want to calculate $c_{i,j}=\sum_{\ell=1}^n a_{i,\ell}\cdot b_{\ell,j}$
for $i=1,\ldots,k$ and $j=1,\ldots,m$.
W.l.o.g. $a_{i,\ell},b_{\ell,j}\geq0$; otherwise decompose.
Choose $\bigint>(\max_{i,\ell} a_{i,\ell})\cdot(\max_{\ell,j} b_{\ell,j})\cdot n$;
then compute 
\[ \alpha:=\sum_{i=1}^k\sum_{\ell=1}^n a_{i,\ell}\cdot\bigint^{(\ell-1)+2nm(i-1)}
\quad\text{ and }\quad
\beta:=\sum_{\ell=1}^n\sum_{j=1}^m b_{\ell,j}\cdot\bigint^{(n-\ell)+2n(j-1)}
\enspace . \]
As indicated in Figure~\ref{f:mamu},
the $\bigint$-adic expansion of their product $\gamma:=\alpha\cdot\beta$
contains all desired numbers $c_{i,j}$ at `position'
$\bigint^{2n(j-1)+(n-1)+2nm(i-1)}$ from which they are easily extracted
using division with remainder.
\qed\end{proof}
Observe that most of the time is spent encoding and decoding 
the input and output, respectively. However the right
factor is encoded differently from the left one; hence
binary powering yields computation of $A^k$ from
$A\in\IZ^{n\times n}$ within $\calO(n^2\cdot\log k)$
whereas a running time of $\calO(n^2+\log k)$, i.e.
by encoding and decoding only at the beginning and the
end, seems infeasible. We shall return to this topic
in Section~\ref{s:Power}.

\subsection{Determinant and Permanent} \label{s:det}
Over arithmetic operations $(+,-,\times)$, 
the asymptotic complexities of matrix multiplication 
and of determinant computation are arbitrarily close to each other
\mycite{Section~16.4}{ACT}.The same turns out to hold as well
when including integer division: not by means of reduction
but by exhibiting explicit algorithms. They reveal also
the permanent to be computable in information-theoretically 
optimal time---whereas over $(+,-,\times)$
it is \textsc{Valiant}\textsf{NP}-complete 
\mycite{Theorem~21.17}{ACT}.

\begin{proposition} \label{p:det}
Given $A\in\IZ^{n\times n}$, one can calculate
both $\det(A)$ and $\perm(A)$ within
$\calO(n^2)$ operations over $\{+,-,\times,\mydiv\}$.
\end{proposition}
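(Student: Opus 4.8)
The plan is to realize both $\det(A)$ and $\perm(A)$ as the value at a suitable integer argument of a univariate polynomial whose coefficients we can encode once and for all into a single large integer, then extract the wanted coefficient by division with remainder, in the spirit of Algorithm~\ref{a:Bshouty} and Observation~\ref{o:Mamu}. Concretely, recall the combinatorial expansions
\[
  \det(A) \;=\; \sum_{\sigma\in S_n} \sign(\sigma)\prod_{i=1}^n a_{i,\sigma(i)},
  \qquad
  \perm(A) \;=\; \sum_{\sigma\in S_n} \prod_{i=1}^n a_{i,\sigma(i)} .
\]
Naively this is $n!$ terms, so the trick must be to let the integer arithmetic do the summation over $\sigma$ implicitly. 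The natural vehicle is a product of $n$ ``row polynomials'': if we place the entries of row $i$ at digit positions that are $B$-adic powers encoding the column index $\ell$, then expanding $\prod_{i=1}^n\big(\sum_{\ell} a_{i,\ell}\,B^{e(i,\ell)}\big)$ produces one digit-block per function $i\mapsto\ell$, and with the right choice of exponents $e(i,\ell)$ only the blocks coming from genuine permutations land in a distinguished range while all others (repeated columns) are pushed elsewhere; summing those distinguished blocks then yields $\perm(A)$, and a signed variant yields $\det(A)$.

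First I would fix the base: choose $B$ larger than $n!\cdot(\max_{i,\ell}|a_{i,\ell}|)^n$ (after the usual reduction to nonnegative entries by splitting $A=A^{+}-A^{-}$ and expanding multilinearly, which costs only a constant factor $2^{?}$—actually here just a constant number of auxiliary evaluations, as in Proposition~\ref{p:multivar}), so that no carries corrupt the relevant digit block. Second, I would design the exponent map $e(i,\ell)$ so that a product term indexed by $(\ell_1,\dots,\ell_n)$ occupies position $\sum_i e(i,\ell_i)$, arranging—by a ``Fibonacci-like'' or mixed-radix spacing—that $\sum_i e(i,\ell_i)$ equals a fixed target value $E^\star$ precisely when $(\ell_1,\dots,\ell_n)$ is a permutation, and is strictly separated from $E^\star$ otherwise; this is the encoding analogue of the diagonal extraction in Figure~\ref{f:mamu}. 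Building the integer $\alpha=\prod_{i=1}^n\big(\sum_\ell a_{i,\ell}B^{e(i,\ell)}\big)$ takes $\calO(n)$ forming of the row sums and $\calO(n)$ multiplications, hence $\calO(n^2)$ operations total over $\{+,-,\times,\mydiv\}$ once the powers $B^{e(i,\ell)}$ are generated—and those $\calO(n^2)$ powers can be produced with $\calO(n^2)$ multiplications. Third, extract the coefficient of $B^{E^\star}$ by one $\mydiv$ and one $\myrem$, giving $\perm(A)$. For $\det(A)$ I would instead track the sign, e.g. by running the same construction on an auxiliary matrix whose entries carry an extra factor recording inversions, or more cleanly by expressing $\det(A)$ via a small fixed number of permanent-like evaluations using the polarization identity $\det = \sum_{\epsilon} \pm\perm(\operatorname{diag}(\epsilon)A)/\text{const}$—in any case a constant number of invocations of the permanent routine, preserving the $\calO(n^2)$ bound.

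The main obstacle is the exponent design in the second step: I need a family $e(i,\ell)\in\IN$, computable in $\calO(n^2)$ arithmetic steps, such that the sum $\sum_{i} e(i,\ell_i)$ distinguishes permutations from all other tuples with no collisions, and such that the resulting integer $\alpha$ is not astronomically large in a way that would (even in the unit-cost model) secretly cost more than $\calO(n^2)$ operations to assemble. A clean way is to take a ``matrix-style'' encoding: let $e(i,\ell) = (\ell-1)\cdot M + f(i)$ for a suitable modulus-separated $f$, so that the multiset $\{\ell_1,\dots,\ell_n\}$ is read off from the $B^M$-adic digits of $\sum_i(\ell_i-1)$; a tuple is a permutation iff that sum equals $0+1+\cdots+(n-1)=\binom{n}{2}$ \emph{and} additionally each column is used, which by a pigeonhole/second-moment encoding can be forced by also tracking $\sum_i(\ell_i-1)^2$ at a separate, well-separated block—i.e. encode the pair $\big(\sum(\ell_i-1),\sum(\ell_i-1)^2\big)$ and note it hits $\big(\binom{n}{2},\,\sum_{\ell=0}^{n-1}\ell^2\big)$ exactly on permutations. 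Making this rigorous—choosing the block sizes, verifying uniqueness, bounding $B$ and the magnitudes, and threading the sign through for the determinant—is the real work; but none of it needs more than $\calO(n^2)$ operations, since everything reduces to forming $\calO(n)$ sums/products of precomputed powers of $B$ plus a constant number of $\mydiv$/$\myrem$ extractions, exactly matching Observation~\ref{o:Mamu}'s accounting.
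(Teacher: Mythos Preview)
Your row-product plan $\alpha=\prod_i\sum_\ell a_{i,\ell}B^{e(i,\ell)}$ is the right shape, but both places where you defer details actually fail. First, the moment-based exponent design does not isolate permutations: you claim that a tuple $(\ell_0,\ldots,\ell_{n-1})\in\{0,\ldots,n-1\}^n$ is a permutation iff $\sum_i\ell_i$ and $\sum_i\ell_i^2$ agree with those of $(0,1,\ldots,n-1)$, but for $n=8$ the tuple $(0,2,2,3,3,4,7,7)$ has sum $28$ and sum of squares $140$, identical to $(0,1,\ldots,7)$ (replace the sub-triple $\{1,5,6\}$ by $\{2,3,7\}$, a standard equal-power-sums pair). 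So non-permutation terms land in your block $B^{E^\star}$ and corrupt the answer. What \emph{does} work is the binary weighting $e(i,\ell)=2^{\ell}$: then $\sum_i 2^{\ell_i}=2^n-1$ forces the $\ell_i$ to be pairwise distinct, and the coefficient of $B^{2^n-1}$ in $\alpha$ is exactly $\perm(A)$. This ``one variable per power of two'' separation is precisely the mechanism behind the $d=2$ case of Proposition~\ref{p:multivar}, which is what the paper invokes.

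Second, your reduction of $\det$ to a constant number of permanents does not exist. Since $\perm$ is multilinear in the rows, $\perm\!\big(\operatorname{diag}(\epsilon)A\big)=\big(\prod_i\epsilon_i\big)\,\perm(A)$; hence every linear combination over sign vectors $\epsilon$ is merely a scalar multiple of $\perm(A)$ and can never equal $\det(A)$ for generic $A$ once $n\geq2$. The paper does not reduce $\det$ to $\perm$. It treats both uniformly as $n^2$-variate multilinear polynomials with coefficients in $\{0,\pm1\}$, splits $\det$ into its even- and odd-permutation halves (each with nonnegative coefficients), and applies Proposition~\ref{p:multivar} to each half. The only nontrivial ingredient that Proposition~\ref{p:multivar} requires beyond its $\calO(n^2)$ routine is the value of the polynomial at the special argument, and the paper observes that this collapses to $\Card(\mathcal{P})\cdot\bigint'^{(n+1)n(n-1)/2}$ because $\sum_i\big(i+n\pi'(i)\big)$ is the same for every permutation $\pi$---so the sign is handled by two cardinalities ($n!$ versus $n!/2$), not by any algebraic identity relating $\det$ and $\perm$.
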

Notice that, as opposed to Theorem~\ref{t:binary}, 
bitwise conjunction ``$\myand$'' is \emph{not} needed!
\begin{proof}
Both
\[
\det(A)\;=\:\sum_{\pi\in\mathcal{S}_n} \sign(\pi) a_{1,\pi(1)}\cdots a_{n,\pi(n)}
\quad\text{and}\quad
\perm(A)\;=\:\sum_{\pi\in\mathcal{S}_n} a_{1,\pi(1)}\cdots a_{n,\pi(n)}
\]
are polynomials in $n^2$ variables $x_{i-1+n(j-1)}:=a_{i,j}$
of maximum degree less than $d:=2$ with coefficients $0,\pm1$.
As in Section~\ref{s:binary} it thus suffices, in view of the proof
of Proposition~\ref{p:multivar}, to (decompose $\det$ into
one part with positive and one with negative coefficients and to)
obtain their respective values at 
$\vec x=(x_0,\ldots,x_{n^2-1}):=(\bigint',\bigint'^2,\bigint'^4,\ldots,\bigint'^{2^{n^2}})$
where $\bigint':=\bigint\cdot\medint$ for
$\bigint:=(\max_k|x_k|)^{2^{n^2}}$ and $\medint$
denotes some appropriate constant.
Now $\vec x$ can be computed in $\calO(n^2)$; and so can the quantity
\[ 
\sum_{\pi\in\mathcal{P}}
  \prod_{i=0}^{n-1} \underbrace{x_{i+n\pi'(i))}}_{=\bigint'^{i+n\pi'(i)}} \;=\;
\sum_{\pi\in\mathcal{P}} \bigint'^{\sum_{i=0}^{n-1} i+n\pi'(i)}
\;=\;\sum_{\pi\in\mathcal{P}} \bigint'^{\frac{n(n-1)}{2}+n\frac{n(n-1)}{2}}
\]
equating to
$\Card(\mathcal{P})\cdot \bigint'^{(n+1)n(n-1)/2}$
where $\pi'(i):=\pi(i+1)-1$ and $\mathcal{P}$ denotes
either $\mathcal{S}_n$ (permanent, $\Card=n!$) or one of
$\{\pi\in\mathcal{S}_n:\sign(\pi)=-1\}$ or
$\{\pi\in\mathcal{S}_n:\sign(\pi)=+1\}$ 
(determinant, $\Card=n!/2$).
\qed\end{proof}

\begin{remark}
Just recently have we been pointed out a more direct algorithm 
calculating $\IN^{n\times n}\ni A\mapsto\perm(A)$ over
$(+,-,\times,\mydiv)$ in $\calO(n^2)$ steps 
inspired by the proof of \mycite{Proposition~2.4}{Allender}.
\end{remark}

\subsection{Integer Matrix Powering: 
Exploiting GCD} \label{s:Power}
The unit cost assigned to multiplication ``$\times$'' allows to
compute high powers like $a^{2^k}$ of a given input $a$ to be
calculated by squaring $k$-times. However the presence of 
integer division \emph{and} the additional input of 
a sufficiently large but otherwise arbitrary
integer $b$ yields $a^{2^k}$
in only $\calO(\sqrt{k})$ steps; recall Example~\ref{x:Example}b).
We now generalize this to powering integer \emph{matrices}.

\begin{definition} \label{d:Power}
For $X,C\in\IZ^{d\times d}$, 
let $\gcd(C):=\gcd(c_{ij}:1\leq i,j\leq d)$ 
and $X\myrem C:=\big(x_{ij}\myrem \gcd(C)\big)$
extend the gcd and remainder from natural numbers
to integer matrices.
\\
Also write ``$X\equiv Y \pmod C$'' ~if
$\gcd(C)$ divides each entry $x_{ij}-y_{ij}$ of $X-Y$.
\end{definition}
For fixed $C$, this obviously yields an equivalence relation
on $\IZ^{d\times d}$; in fact a two-sided congruence relation\footnote{%
conversely, every two-sided ideal in $\IZ^{d\times d}$ is of this
form \mycite{Proposition~III.2.1}{Jacobson} but we won't need this fact}, 
since one easily verifies:
\begin{lemma} \label{l:Power}
\begin{enumerate}
\item[a)]
If $X\equiv Y \pmod C$, then
$S\cdot X\cdot T\equiv S\cdot Y\cdot T \pmod C$.
\item[b)]
For each $n\in\IN$ it holds $X^n \equiv Y^n \pmod{X-Y}$.
\item[c)]
$X\myrem C\equiv X \pmod C$.
\item[d)]
If $0\leq x_{ij}<\gcd(C)$ then
$X\myrem C=X$.
\end{enumerate}
\end{lemma}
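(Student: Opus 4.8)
The statement to prove is Lemma~\ref{l:Power}, comprising four parts about the congruence relation ``$\equiv \pmod C$'' on integer matrices. The plan is to unwind Definition~\ref{d:Power}: the relation $X\equiv Y\pmod C$ means $\gcd(C)$ divides every entry of $X-Y$, i.e. all entries of $X-Y$ lie in the ideal $g\IZ\subseteq\IZ$ where $g:=\gcd(C)$. So each part reduces to an entrywise divisibility statement about integers, and the only genuine content is part~(b).

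For part~(a), I would observe that the $(i,j)$ entry of $S\cdot X\cdot T - S\cdot Y\cdot T = S\cdot(X-Y)\cdot T$ equals $\sum_{k,\ell} s_{ik}(x_{k\ell}-y_{k\ell})t_{\ell j}$, a $\IZ$-linear combination of the entries $x_{k\ell}-y_{k\ell}$, each of which is divisible by $g$ by hypothesis; hence so is the whole sum. For part~(c), $X\myrem C$ has $(i,j)$ entry $x_{ij}\myrem g$, which differs from $x_{ij}$ by a multiple of $g$ (the definition of remainder), so $X\myrem C - X$ has all entries divisible by $g$. For part~(d), if $0\le x_{ij}<g$ for all $i,j$ then $x_{ij}\myrem g = x_{ij}$, so $X\myrem C = X$ literally; this is immediate from the definition of remainder.

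The main step is part~(b): $X^n\equiv Y^n\pmod{X-Y}$. Here $g=\gcd(X-Y)$ and one must show $g$ divides every entry of $X^n-Y^n$. I would use the telescoping identity
\[
X^n-Y^n \;=\; \sum_{k=0}^{n-1} X^{k}\,(X-Y)\,Y^{\,n-1-k},
\]
which holds in the (non-commutative) ring $\IZ^{d\times d}$ by the usual cancellation of intermediate terms. By part~(a) applied with $S=X^k$, $T=Y^{n-1-k}$ and the trivial congruence $(X-Y)\equiv 0\pmod{X-Y}$, each summand $X^k(X-Y)Y^{n-1-k}$ has all entries divisible by $g$; since divisibility by $g$ is preserved under addition, so does the sum $X^n-Y^n$. (One can also argue purely entrywise: each entry of $X^n-Y^n$ is, by the telescoping identity, a $\IZ$-linear combination of the entries of $X-Y$.) A small point worth noting is the edge case $n=0$, where $X^0-Y^0 = I-I = 0$ and the claim is trivial. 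I do not anticipate any real obstacle: once the telescoping identity is written down, part~(b) follows from part~(a), and the remaining parts are direct consequences of the definition of integer remainder.
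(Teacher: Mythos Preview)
Your proof is correct and follows essentially the same strategy as the paper, which only remarks that part~(b) ``follows from the non-commutative binomial theorem in connection with a)'' and leaves (a), (c), (d) as easy verifications. The sole cosmetic difference is that the paper's hint points to expanding $X^n=\big(Y+(X-Y)\big)^n$ non-commutatively (every term except $Y^n$ then contains a factor $X-Y$), whereas you use the telescoping identity $X^n-Y^n=\sum_{k=0}^{n-1}X^k(X-Y)Y^{n-1-k}$; both reduce (b) to (a) in one line.
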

Claim~b) follows from the \emph{non-}commutative binomial theorem
in connection with a).

Now apply Lemma~\ref{l:Power} to $n:=2^\ell$,
$X:=A^{2^{\ell(j-1)}}$, $Y:=B^{2^\ell}$, and $C:=Y-X$ in order to conclude
\begin{equation} \label{e:Power}
 A^{2^{\ell j}} \;=\; \big(A^{2^{\ell(j-1)}}\big)^{2^\ell}
  \;=\; B^{2^\ell} \myrem \big(B-A^{2^{\ell(j-1)}}\big)
\end{equation}
\emph{provided}
that $\gcd(B-A^{2^{\ell(j-1)}})$ is larger than the entries
of $A^{2^{\ell^2}}$.
In the case $d=1$ treated in Example~\ref{x:Example}b),
this amounts to the condition that $B=(b)$ simply be 
larger than $a^{2^{\ell^2}}$. In the general case $d>1$,
Section~\ref{s:Gcd} below
describes how to obtain matrices $B$ appropriate for our 

\begin{theorem} \label{t:Power}
Given $k\in\IN$, $A\in\IN^{d\times d}$;
furthermore given some
$B\in\IN^{d\times d}$ such that,
for all $0\leq c_{ij}<d^{2^{k}-1}\cdot (\max_{ij} a_{ij})^{2^{k}}=:r$,
it holds $\gcd(B-C)>r$;
then one can compute $A^{2^k}$ using
$\calO(d^2\cdot\sqrt{\log k})$ 
operations over $\{+,-,\times,\mydiv,\gcd\}$.
\end{theorem}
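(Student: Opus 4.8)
The plan is to transplant the scalar construction behind Example~\ref{x:Example}b) ``entrywise over the whole matrix'', using Equation~(\ref{e:Power}) as the only algebraic ingredient and Observation~\ref{o:Mamu} to charge only $\calO(d^2)$ — rather than $\calO(d^3)$ — for every matrix multiplication that occurs. Since $A\in\IN^{d\times d}$, the customary decomposition of a matrix into a part with positive and one with negative entries is vacuous, so one works throughout with nonnegative matrices and all powers $A^{2^m}$ are themselves nonnegative with small entries.

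The step I would write out in full is why the hypothesis on $B$ is exactly what every invocation of~(\ref{e:Power}) needs. The entries of a matrix power $A^{2^m}$ are bounded by $d^{2^m-1}\cdot(\max_{ij}a_{ij})^{2^m}$, which is $\le r$ for every $m\le k$ and strictly $<r$ for $m<k$; hence for $\ell(j-1)<k$ the matrix $C:=A^{2^{\ell(j-1)}}$ has all entries in $[0,r)$, so the hypothesis yields $\gcd(B-C)>r$. Lemma~\ref{l:Power}(b), applied with $X=C$, $Y=B$, $n=2^{\ell}$ and noting $X^{n}=A^{2^{\ell j}}$, gives $A^{2^{\ell j}}\equiv B^{2^{\ell}}\pmod{B-C}$, and then parts (c) and (d) upgrade this congruence to the \emph{exact} identity $A^{2^{\ell j}}=B^{2^{\ell}}\myrem(B-C)$, because both sides have all entries strictly below $\gcd(B-C)$ — the left one by the bound $\le r<\gcd(B-C)$, the right one by definition of $\myrem$. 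The same reasoning goes through after replacing $A$ by any fixed power $A^{2^{s}}$, which lets the ladder start at an arbitrary offset $s$ and reach $2^{k}$ after $\lfloor k/\ell\rfloor$ steps.

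It then remains to schedule these operations. A single use of~(\ref{e:Power}) costs $\calO(d^2)$ — form $B-C$, take the $\gcd$ of its $d^2$ entries, and reduce the $d^2$ entries of the \emph{precomputed} constant $B^{2^{\ell}}$ modulo that $\gcd$ — and a single matrix squaring costs $\calO(d^2)$ by Observation~\ref{o:Mamu}; so, exactly as in Example~\ref{x:Example}b), one alternates a short block of repeated squarings (to prepare the auxiliary power $B^{2^{\ell}}$ and a small initial power $A^{2^{s}}$) with a block of applications of~(\ref{e:Power}), each of the latter advancing the exponent $2^{(\cdot)}$ by a further factor $2^{\ell}$; balancing the two block lengths — and preparing $B^{2^{\ell}}$ itself by the same device recursively — gives the same $\calO(\sqrt{\log k})$ economy in the number of matrix operations as in~b), and multiplying by the $\calO(d^2)$ cost per operation yields the claim. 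The main obstacle is precisely the cheap manufacture of the auxiliary power $B^{2^{\ell}}$: naively it costs $\ell$ squarings and is the very bottleneck already present in the scalar case, so the argument has to exploit the full strength of the hypothesis — that $\gcd(B-C)$ is large against \emph{every} small matrix $C$, not merely against $A$ — in order to produce $B^{2^{\ell}}$ by a shorter nested ladder, and one must verify that along this nesting all intermediate matrices keep their entries below $r$ so that every $\gcd$-condition stays in force. The remaining items — bookkeeping of the exponents, the at most $\ell$ initial squarings for $A^{2^{s}}$, and the final extraction of $A^{2^{k}}$ — are routine.
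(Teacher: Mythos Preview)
Your overall skeleton --- use Equation~(\ref{e:Power}) as the recurrence, charge $\calO(d^2)$ per matrix multiplication via Observation~\ref{o:Mamu}, and verify via the entry bound $d^{m-1}s^m$ that Lemma~\ref{l:Power}d) applies at every step --- is exactly what the paper does. The divergence is in the target bound and in your handling of $B^{2^\ell}$.

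First, a point about the statement itself: the paper's own proof sets $k=\ell^2$, computes $B^{2^\ell}$ by $\ell$ plain squarings at cost $\calO(d^2\ell)$, and then runs the ladder~(\ref{e:Power}) for $j=1,\ldots,\ell$ at cost $\calO(d^2)$ each, for a total of $\calO(d^2\ell)=\calO(d^2\sqrt{k})$. This matches Example~\ref{x:Example}b), which is $\calO(\sqrt{k})$, not $\calO(\sqrt{\log k})$; the ``$\sqrt{\log k}$'' in the displayed theorem is evidently a slip. So the ``bottleneck'' you worry about --- the $\ell$ squarings for $B^{2^\ell}$ --- is not something the paper tries to beat; it simply absorbs it into the $\calO(d^2\sqrt{k})$ budget.

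Second, your proposed fix for that bottleneck has a genuine gap. You want to manufacture $B^{2^\ell}$ ``by the same device recursively'', invoking the hypothesis that $\gcd(B-C)>r$ for \emph{every} small $C$. But the recurrence~(\ref{e:Power}) applied with $A$ replaced by $B$ produces intermediate matrices $B^{2^{m}}$ whose entries are of order $d^{2^m-1}(\max_{ij}b_{ij})^{2^m}$, and these are \emph{not} below $r$: already $b_{ij}$ itself must exceed $r$ (since $\gcd(B-0)>r$), so no power of $B$ is a legal ``small $C$''. Hence the $\gcd$ condition is unavailable along your nested ladder, and Lemma~\ref{l:Power}d) cannot be invoked there. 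You flag exactly this as ``the main obstacle'' and say ``one must verify that \ldots all intermediate matrices keep their entries below $r$'', but that verification fails, not merely remains to be done.

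If you drop the recursion and compute $B^{2^\ell}$ by $\ell$ honest squarings, your write-up is essentially the paper's proof and yields the (corrected) bound $\calO(d^2\sqrt{k})$.
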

\begin{proof}
It suffices to treat the case $k=\ell^2$.
First calculate $B^{2^\ell}$ using repeated
squaring within $\calO(d^2\cdot\ell)$ 
according to Observation~\ref{o:Mamu}.
Then proceed, inductively for $j=1,\ldots,\ell$, from $A^{2^{\ell(j-1)}}$
to $A^{2^{\ell j}}$ according to Equation~(\ref{e:Power}),
again at cost $\calO(d^2\cdot\ell)$ each.
Indeed, the $m$-th power $C$ of a 
$d\times d$--matrix $A$ with entries in $\{0,1,\ldots,s\}$
has entries in $\{0,1,\ldots,d^{m-1}\cdot s^m\}$;
hence the prerequisite on $B$ shows that Lemma~\ref{l:Power}d)
applies.
\qed\end{proof}
The binary gcd operation is used to compute $\gcd(C)$ in
$\calO(d^2)$ steps and then $X\myrem C$ 
according to Definition~\ref{d:Power}. 
In fact we were surprised to realize that the above sequence
$A^{2^{kj}}$, $j=0,\ldots,k$, is obtained 
according to Equation~(\ref{e:Power}) merely by
\emph{componentwise} remainder calculations.

\subsection{Locally Lower-Bounding the GCD}  \label{s:Gcd}
Upper semi-continuity of a \emph{real} function 
$f:\IR^d\to\IR$  at  $\vec x$
means that, for arguments  $\vec u$  sufficiently close to  $\vec x$, 
the values  $f(\vec u)$  do not drop below   $f(\vec x)$  too much;
recall e.g. \mycite{Chapter~6.7}{Randolph}.
Now the greatest common divisor ($\gcd$) function is \emph{discrete}
and such topological concepts hence inapplicable in the strict sense.
Nevertheless one may say that $\gcd$ does admit 
points $\vec x$ arbitrarily close to 'approximate' upper 
semi-continuity:

\begin{lemma} \label{l:Gcd}
To $d,r,s\in\IN$ there exist $x_1,x_2,\ldots,x_d\in\IN$ such that,
for all $v_1,\ldots,v_d\in\{0,1,\ldots,s-1\}$, it holds
$\gcd(x_1+v_1,\ldots,x_d+v_d)\geq r$.
\end{lemma}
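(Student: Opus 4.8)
The plan is to enforce each required lower bound through a dedicated prime divisor, fitting all the resulting divisibility constraints together by the Chinese Remainder Theorem. Enumerate the $N:=s^d$ vectors of $\{0,1,\ldots,s-1\}^d$ as $\vec v^{(1)},\ldots,\vec v^{(N)}$; since there are infinitely many primes, I would choose $N$ pairwise distinct primes $q_1,\ldots,q_N$, each at least $\max\{r,s\}$. The intention is that $q_t$ shall divide every entry of the tuple $(x_1+v^{(t)}_1,\ldots,x_d+v^{(t)}_d)$, forcing its gcd to be at least $q_t\geq r$.

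Next, fixing a coordinate $i\in\{1,\ldots,d\}$, I would impose on the single unknown $x_i$ the system of congruences $x_i\equiv -v^{(t)}_i \pmod{q_t}$ for $t=1,\ldots,N$. Because the moduli $q_1,\ldots,q_N$ are pairwise coprime, the Chinese Remainder Theorem supplies a solution; adding a suitable multiple of $q_1\cdots q_N$ one may take $x_i\in\IN$. Performing this independently for every $i=1,\ldots,d$ yields a tuple $(x_1,\ldots,x_d)$. To verify the claim, take arbitrary $v_1,\ldots,v_d\in\{0,\ldots,s-1\}$ and let $t$ be the index with $\vec v^{(t)}=(v_1,\ldots,v_d)$; then $q_t\mid x_i+v_i$ for each $i$, hence $q_t$ divides $\gcd(x_1+v_1,\ldots,x_d+v_d)$, so that $\gcd(x_1+v_1,\ldots,x_d+v_d)\geq q_t\geq r$.

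The only delicate point --- indeed the conceptual heart of the argument --- is the bookkeeping of moduli: a separate prime is spent on each value-\emph{vector}, not on each coordinate nor on each coordinate value. This is forced on us because, for fixed $i$, the $N$ congruences that $x_i$ must satisfy need pairwise coprime moduli for CRT to be applicable; were two vectors that differ in their $i$-th entry assigned a common prime $p\geq s$, one would be driven into the impossible $-v^{(t)}_i\equiv x_i\equiv -v^{(t')}_i\pmod p$ with $0<|v^{(t)}_i-v^{(t')}_i|<s\leq p$. Using $s^d$ distinct primes dissolves this at once; every other step is a routine CRT computation.
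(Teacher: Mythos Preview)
Your proof is correct and follows essentially the same approach as the paper: assign one prime (or pairwise coprime integer) $\geq r$ to each vector $\vec v\in\{0,\ldots,s-1\}^d$, then use CRT coordinate-wise to force the required divisibility. The only cosmetic difference is that the paper first groups, for each fixed $i$, all primes $p_{\vec v}$ with $v_i=j$ into a single modulus $u_{i,j}:=\prod_{\vec v:v_i=j}p_{\vec v}$ and applies CRT to the $s$ congruences $x_i\equiv -j\pmod{u_{i,j}}$, whereas you apply CRT directly to all $s^d$ prime congruences; these are equivalent formulations.
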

\begin{proof}
Take pairwise coprime integers $p_{\vec v}\geq r$,
$\vec v\in\{0,1,\ldots,s-1\}^d$; e.g. distinct prime numbers will do.
For $i=1,\ldots,d$ and $j=0,\ldots,s-1$ let
$u_{i,j}:=\prod_{\vec v: v_i=j} p_{\vec v}$.
Then, for fixed $i$, the numbers $u_{i,0},u_{i,1},\ldots,u_{i,s-1}$
are pairwise coprime themselves.
Hence, by the Chinese Remainder Theorem,
there exists $x_i\in\IN$ such that $u_{i,j}$ divides $x_i+j$ for 
all $j=0,1,\ldots,s-1$. In particular $p_{\vec v}$, which is
common to all $u_{i,v_i}$, divides $x_i+v_i$ for each $i=1,\ldots,d$;
and thus also divides $\gcd(x_1+v_1,\ldots,x_d+v_d)$: which therefore
must be at least as large as $p_{\vec v}\geq r$.
\qed\end{proof}
\begin{scholium} \label{sc:Gcd}
\begin{enumerate}
\item[a)]
$x_1,\ldots,x_d$ according to Lemma~\ref{l:Gcd} can be
chosen to lie between $0$ and $(r\cdot S)^{\calO(S)}$
where $S:=s^d$.
\item[b)]
It can be constructed (although not necessarily within this bound)
using $\calO(S)$ operations over $(+,-,\times,\mydiv,\gcdex)$.
\end{enumerate}
\end{scholium}
Here ``$\gcdex$'' denotes the \emph{extended} (binary) gcd function
returning, for given $x,y\in\IN$, $s,t\in\IZ$ 
(w.l.o.g. coprime) such that $\gcd(x,y)=sx+ty$.
\begin{proof}
\begin{enumerate}
\item[a)]
According to the \textsf{Prime Number Theorem},
the $k$-th prime $p_k$ has magnitude $\calO(k\cdot\log k)$
and there are at most $\pi(n)\leq\calO(n/\log n)$ primes below $n$.
Hence the first prime at least as large as $r$
has index $k_r\leq\calO(r/\log r)$; and we are interested
in bounding the product $N=p_{k_r}\cdots p_{k_r+S}$,
that is basically the quotient of \textsf{primorials}
$(r+\ell)\#/r\#$ where $r+\ell=p_{k_r+S}=r+\cal(S\cdot\log S)$.
It has been shown\footnote{We are grateful 
to our colleague \textsc{Stefan Wehmeier} for pointing
out this bound!}  \cite{Montgomery}
that $\pi(r+\ell)-\pi(r)\leq2\pi(\ell)$ holds;
that is, between $r$ and $r+\ell$ there are at most
$\calO(\ell/\log\ell)=\calO(S)$ primes; 
and each obviously
not larger than $r+\ell$. Hence
$(r+\ell)\#/r\#\leq (r+\ell)^{\calO(\ell/\log\ell)}
\leq (r\cdot\ell)^{\calO(\ell/\log\ell)}$
for $\ell=\calO(S\cdot\log S)$.
\item[b)]
Pairwise coprime integers $p_{i}\geq r$ can be found
iteratively as $p_1:=r$, $p_2:=r+1$, $p_3:=p_1\cdot p_2+1$,
and $p_{i+1}:=p_1\cdots p_i+1$. Then apply the next lemma.
\qed\end{enumerate}\end{proof}

\begin{lemma}[Chinese Remainder] \label{p:CRT}
Given integers $a_1,\ldots,a_n$ and pairwise coprime $m_1,\ldots,m_n\in\IN$,
one can calculate $x\in\IN$ with $x\equiv a_i\pmod{m_i}$ for $i=1,\ldots,n$
with $\calO(\log n\cdot\sum_{i=1}^n\log m_i)$ operations over $(+,-,\times,\mydiv)$.
\
When permitting in addition $\gcdex$ as primitive,
the running time drops down to $\calO(n)$.
\end{lemma}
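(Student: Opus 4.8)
The plan is to prove both halves of Lemma~\ref{p:CRT} separately, the first by a standard balanced divide-and-conquer and the second by exploiting $\gcdex$ to make the ``combine'' step run in amortized constant time.

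\textbf{First part (the $\calO(\log n\cdot\sum_i\log m_i)$ bound).} I would recurse on the index set: split $\{1,\ldots,n\}$ into two halves, recursively obtain $x'\equiv a_i\pmod{m_i}$ for $i$ in the first half together with $M':=\prod_{i\in\text{first half}}m_i$, and likewise $x''$, $M''$ for the second half. Then reconstruct via $x:=x'+M'\cdot\big((x''-x')\cdot (M'^{-1}\bmod M'')\bmod M''\big)$, which satisfies all the congruences by pairwise coprimality. The inverse $M'^{-1}\bmod M''$ is the obstacle here, since plain $(+,-,\times,\mydiv)$ has no modular-inverse primitive; I would compute it as a power $M'^{\,\varphi(M'')-1}\bmod M''$ — but $\varphi$ is not available either, so instead I would observe that in the RAM unit-cost model one may simply precompute nothing and instead run the \emph{iterated} form: fold in the moduli one at a time, $x\mapsto x+\big((a_{k+1}-x)\cdot u_k\big)\cdot(m_1\cdots m_k)$ where $u_k$ is an inverse of $m_1\cdots m_k$ modulo $m_{k+1}$, and obtain that inverse by one more invocation of the extended-gcd-by-division loop. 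Since each modular inverse modulo $m_{k+1}$ costs $\calO(\log m_{k+1})$ division steps and there are $\calO(\log n)$ levels in the balanced version (or a careful charging argument in the iterated version), the arithmetic on numbers of bit-length $\sum_i\log m_i$ multiplied by the recursion depth $\calO(\log n)$ yields the stated count; here I am using that each $+,-,\times,\mydiv$ is one unit-cost step regardless of operand size.

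\textbf{Second part (the $\calO(n)$ bound with $\gcdex$).} The point of $\gcdex$ is exactly that it returns, in a single primitive step, coprime $s,t$ with $sx+ty=\gcd(x,y)$; applied to $x=m_1\cdots m_k$ and $y=m_{k+1}$ (which are coprime, hence $\gcd=1$), it delivers $s,t$ with $s\cdot(m_1\cdots m_k)+t\cdot m_{k+1}=1$, so $s$ \emph{is} the modular inverse we needed, obtained in $\calO(1)$ rather than $\calO(\log m_{k+1})$ steps. Then the iterated reconstruction does a constant number of $\{+,-,\times,\gcdex\}$ operations per new modulus, and maintaining the running product $m_1\cdots m_k$ is one multiplication per step; over $n$ steps this is $\calO(n)$ total. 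I would also note that the intermediate integer $x$ stays below $\prod_i m_i$ if one reduces after each fold (one $\mydiv$), which keeps everything well-defined, though under unit cost the size does not affect the count anyway.

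\textbf{Main obstacle.} The genuinely delicate point is the first part: without $\gcdex$ (or a modular-inverse or $\varphi$ primitive) one must realize the inverse of $m_1\cdots m_k$ modulo $m_{k+1}$ using only $+,-,\times,\mydiv$, and the natural route is the Euclidean-type division loop, whose length is $\calO(\log m_{k+1})$ — this is precisely where the extra $\log$-type factors in the first bound come from, and getting the bookkeeping to collapse to $\calO(\log n\cdot\sum_i\log m_i)$ rather than something larger (e.g. $\calO(n\cdot\max_i\log m_i)$) requires the balanced product tree rather than the naive left-to-right fold. I would therefore present the balanced version for the first claim and the iterated $\gcdex$ version for the second, since each is cleanest in its own regime.
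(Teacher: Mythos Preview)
Your approach is essentially correct, and for the first bound it coincides with the paper's: both use a balanced binary tree of pairwise combines, with the modular inverse at each internal node computed by the extended Euclidean loop over $(+,-,\times,\mydiv)$. The paper's accounting is simply that the Euclidean work across any single level totals $\calO(\sum_i\log m_i)$, and there are $\calO(\log n)$ levels. Your write-up, however, drifts back and forth between this tree and the left-to-right fold, and the sentence mixing ``modular inverse modulo $m_{k+1}$'' (fold language) with ``$\calO(\log n)$ levels'' (tree language) should be disentangled; pick one and analyze it cleanly. Incidentally, your worry that the naive fold degrades to $\calO(n\cdot\max_i\log m_i)$ is too pessimistic: computing $(m_1\cdots m_k)^{-1}\bmod m_{k+1}$ via Euclid costs $\calO(\log m_{k+1})$ steps (after one division the arguments are both below $m_{k+1}$), so the fold already attains $\calO(\sum_i\log m_i)$---which is at least as good as the tree. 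This does not hurt correctness, since either route meets the stated bound.

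For the $\calO(n)$ bound with $\gcdex$, you and the paper take genuinely different routes. You do a Garner-style left-to-right fold, spending one $\gcdex$ call per new modulus to obtain $(m_1\cdots m_k)^{-1}\bmod m_{k+1}$ in unit time. The paper instead uses the Lagrange-style construction: form $N=m_1\cdots m_n$, then for each $i$ call $\gcdex(m_i,N/m_i)$ to get $t_i$ with $t_i\cdot N/m_i\equiv 1\pmod{m_i}$, set $e_i:=t_i N/m_i$, and output $x:=\sum_i a_i e_i$. Both are $\calO(n)$; yours keeps intermediate values small without an extra reduction, while the paper's version is symmetric in the indices and reuses $N$ throughout.
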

\begin{proof}
Calculate $N:=m_1\cdots m_n$ and, for each $i=1,\ldots,n$,
$e_i:=t_i N/m_i$ where $1=\gcd(m_i,N/m_i)=s_im_i+t_i N/m_i$
with $s_i,t_i\in\IZ$ returned by $\gcdex$.
Then it holds 
$e_i\equiv 1\pmod{m_i}$ and $e_i\equiv 0\pmod{m_j}$ for $j\not=i$;
hence $x:=e_1\cdots e_n$ satisfies the requirements.
\\
When working over $(+,-,\times,\mydiv)$, 
the extended Euclidean algorithm computes
$\gcdex(m_i,N/m_i)$ 
within $\calO(\log N)=\calO(\sum_j\log m_j)$ steps,
for each $i=1,\ldots,n$ separately:
leading to a total running time of $\calO(n\cdot\log N)$.
In order to improve this with respect to $n$,
arrange the equations ``$x\equiv a_i\pmod{m_i}$'',
$i=1,\ldots,n$, into a binary tree: 
first compute simultaneous solutions $y_{j}$ to 
$y\equiv a_{2j}\pmod{m_{2j}}$ and 
$y\equiv a_{2j+1}\pmod{m_{2j+1}}$
for $j=1,\ldots,n/2$;
then solve adjacent quadruples as
$x\equiv y_{2j}\pmod{m_{4j}\cdot m_{4j+1}}$
and $x\equiv y_{2j+1}\pmod{m_{4j+2}\cdot m_{4j+3}}$
for $j=1,\ldots,n/4$;
and so on. The $k$-th level thus consists of solving
$n/2^k$ separate $k$-tuples of congruences involving
disjoint $k$-tuples out of $m_1,\ldots,m_n$;
that is, the corresponding extended Euclidean algorithms
incur cost $\calO(\sum_i\log m_i)$ independent of
$k=1,\ldots,\calO(\log n)$.
\qed\end{proof}
\subsection{Constructing Primes Using Integer Division} \label{s:Primes}
The (last of the)
$S$ pairwise coprime numbers $p_j\geq r$ (and thus also the integers $x_i$)
computed according to Part~b) of Scholium~\ref{sc:Gcd} are of order
$\Omega(r^{2^{S-2}})$ and thus much much larger 
than the ones asserted feasible in Part~a) by choosing $p_j$ as
prime numbers. This raises the question on the benefit of
our non-arithmetic operations for calculating primes,
i.e. for solving Problem~(b) mentioned in the beginning
of \mycite{Chapter~3}{Ribenboim} and addressed in
\textsc{Section~II} thereof.

The Sieve of Eratosthenes finds all primes up to $N$
using $\calO(N)$ operations over $(+,-)$. 
This can be accelerated \cite{Pritchard} to $\calO(N/\loglog N)$;
which is almost optimal in view of the output consisting
of $\Theta(N/\log N)$ primes according to the 
\textsf{Prime Number Theorem}. This also yields a
simple randomized way of finding a prime:

\begin{observation}
Given $N\in\IN$, guess some integer $N\leq n<2N$.
Then, with probability $\Theta(1/\log N)$, 
$n$ is a prime number: hence after $\calO(\log N)$
independent trials we have, with constant probability,
found one. Using Example~\ref{x:Example}a) to
test primality, this leads to $\calO(\log^2 N)$
expected steps over $(+,-,\times,\mydiv)$.
\end{observation}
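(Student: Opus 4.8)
The plan is to combine the density of primes with the linear-time factorization of Example~\ref{x:Example}a). First I would record the relevant counting estimate: by the \textsf{Prime Number Theorem} (the same asymptotics already used in Scholium~\ref{sc:Gcd}a), the number of primes in the interval $[N,2N)$ is $\pi(2N)-\pi(N)=\Theta(N/\log N)$. Consequently a single uniformly random guess $n\in\{N,N+1,\ldots,2N-1\}$ is prime with probability $p_N:=\big(\pi(2N)-\pi(N)\big)/N=\Theta(1/\log N)$, which is exactly the stated $\Theta(1/\log N)$.

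Next comes the standard amplification step. Performing $t:=\lceil 1/p_N\rceil=\calO(\log N)$ independent guesses, the probability that none of them is prime is $(1-p_N)^t\leq e^{-p_N t}\leq e^{-1}<1$, so with constant probability at least $1-1/e$ one of these $t$ guesses is prime. To convert this into a Las~Vegas procedure with the claimed \emph{expected} cost I would simply repeat the block of $t$ guesses until a prime appears; the number of blocks needed is geometrically distributed with success probability $\geq 1-1/e$, hence has expectation $\calO(1)$. (Equivalently one keeps guessing until the first prime shows up, in which case the number of guesses is geometric with mean $1/p_N=\calO(\log N)$.)

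It then remains to account for the cost of one guess-and-test. Producing the guess $n$ costs $\calO(1)$ in the unit-cost model (or $\calO(\log N)$ if one insists on drawing its $\calO(\log N)$ random bits one at a time — immaterial for the final bound). Deciding whether $n$ is prime is done via Example~\ref{x:Example}a), which even \emph{factors} $n$ over $(+,-,\times,\mydiv)$ in $\calO(\log n)=\calO(\log N)$ steps, and $n$ is prime precisely when the returned factorization is trivial. Hence each trial costs $\calO(\log N)$, and multiplying by the $\calO(\log N)$ expected number of trials yields the asserted $\calO(\log^2 N)$ expected running time over $(+,-,\times,\mydiv)$.

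I do not anticipate a real obstacle here; the only point deserving a line of care is that the \emph{lower} bound $\pi(2N)-\pi(N)=\Omega(N/\log N)$ — needed so that $p_N=\Omega(1/\log N)$ and the trial count is $\calO(\log N)$ rather than larger — is not furnished by Chebyshev's elementary bounds alone (their constants are too loose to pry $\pi(2N)$ apart from $\pi(N)$), so one genuinely invokes the asymptotics of the \textsf{Prime Number Theorem}, just as elsewhere in this section.
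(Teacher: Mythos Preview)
Your proposal is correct and matches the paper's (largely implicit) reasoning: the paper treats this observation as essentially self-evident, offering only the one-line remark that Bertrand--Chebyshev guarantees a prime in $[N,2N)$, whereas you spell out the \textsf{Prime Number Theorem} density argument, the geometric amplification, and the per-trial cost via Example~\ref{x:Example}a). Your closing remark that the lower bound $\pi(2N)-\pi(N)=\Omega(N/\log N)$ genuinely requires PNT-level asymptotics (and is not delivered by Bertrand--Chebyshev or the elementary Chebyshev inequalities alone) is a nice clarification the paper omits.
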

Indeed the Bertrand-Chebyshev Theorem asserts
a prime to always exist between $N$ and $2N$.
This trivial algorithm can be slightly improved:

\begin{proposition} \label{p:primes}
Given $N\in\IN$, a randomized algorithm can, at constant
probability and within 
$\calO(\log^2 N/\loglog N)$ steps over $(+,-,\times,\mydiv)$,
obtain a prime $p\geq N$.
\end{proposition}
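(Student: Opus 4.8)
The plan is to refine the trivial randomized search of the preceding Observation by prepending a cheap \emph{trial-division filter}. Fix the threshold $T:=\lceil\log N/\loglog N\rceil$, computable from $N$ (essentially its bit length, divided by the iterated logarithm) in $\calO(\log N)$ steps. The algorithm repeatedly draws an integer $n$ uniformly at random from $\{N,N+1,\ldots,2N-1\}$ and, for each draw, first tests divisibility by $2,3,\ldots,T$ in turn --- each test being one application of $\mydiv$ to obtain a remainder --- aborting this inner loop and discarding $n$ as soon as a divisor is found. Only when $n$ passes all these tests, i.e. has no prime factor $\le T$, do we invoke the $\calO(\log n)$ primality test of Example~\ref{x:Example}a); the first $n$ certified prime is output. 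Every candidate lies in $[N,2N)$, so the result is a prime $\ge N$, which exists by the Bertrand--Chebyshev Theorem.

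First I would estimate the expected running time as a sum of a ``cheap'' and an ``expensive'' part. By Chebyshev's bounds (or the Prime Number Theorem), a uniformly random $n\in[N,2N)$ is prime with probability $q_1=\Theta(1/\log N)$, so in expectation $\Theta(\log N)$ draws occur before the search stops; since each draw spends $\calO(T)$ on trial divisions, the cheap part totals $\calO(T\cdot\log N)$. For the expensive part, observe that a draw reaches the full primality test exactly when it has no prime factor $\le T$; writing $q_2$ for the probability of being a composite of this kind, the inclusion--exclusion sieve of Eratosthenes--Legendre together with Mertens' theorem gives $q_1+q_2=\Theta(1/\log T)=\Theta(1/\loglog N)$ --- the sieve's error term being negligible because $2^{\pi(T)}=N^{o(1)}$ for $T=\calO(\log N)$. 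Restricting attention to the subsequence of draws that are prime or $T$-rough, the winning prime arrives after a geometrically distributed number of them, with success parameter $q_1/(q_1+q_2)$, so in expectation only $(q_1+q_2)/q_1=\calO(\log N/\loglog N)$ primality tests are run, each at cost $\calO(\log N)$: the expensive part totals $\calO(\log^2 N/\loglog N)$. The choice $T\approx\log N/\loglog N$ is precisely the one balancing $T\log N$ against $\log^2 N/\loglog N$ (since then $T\log T\approx\log N$), so the overall expectation is $\calO(\log^2 N/\loglog N)$.

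Finally I would convert this Las~Vegas guarantee into the stated worst-case bound: run the procedure for $c\cdot\log^2 N/\loglog N$ steps and abort otherwise; Markov's inequality makes a suitable constant $c$ succeed with probability at least $1/2$, and independent repetitions amplify this at will. The heart of the argument is the quantitative input that a $\loglog N$-fraction of integers near $N$ already survives a pre-filter costing only one division per excluded small prime, so that pushing $T$ up to $\log N/\loglog N$ is still affordable and shaves exactly the targeted $\loglog N$ factor off the $\calO(\log^2 N)$ of the naive method; I expect verifying this balance, and the admissibility of the sieve's error term, to be the only delicate points. As a convenient side remark, dividing by \emph{all} integers in $[2,T]$ rather than only the primes excludes the same candidates, so no table of primes --- and hence no indirect addressing --- is needed: the entire algorithm stays within $(+,-,\times,\mydiv)$.
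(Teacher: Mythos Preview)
Your proof is correct, but it follows a genuinely different route from the paper's.

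The paper does not use a trial-division pre-filter at all. Instead it exploits a structural feature of the Wilson-theorem primality test underlying Example~\ref{x:Example}a): once $(N-1)!$ has been computed in $\calO(\log N)$ steps \`a la \cite{Shamir}, each adjacent factorial $(N+k)!$ is one multiplication away, so an entire \emph{block} of $K:=\Theta(\log N)$ consecutive integers can be tested for primality in $\calO(\log N)$ total time rather than $\calO(K\log N)$. The algorithm guesses a random offset $M$ and tests the block $N+M,\ldots,N+M+K$; a pigeonhole argument combining the Prime Number Theorem with the Montgomery--Vaughan bound $\pi(r+\ell)-\pi(r)\le 2\pi(\ell)$ shows that an $\Omega(\loglog N/\log N)$ fraction of such blocks contain a prime, so $\calO(\log N/\loglog N)$ independent trials at $\calO(\log N)$ each suffice.

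Your approach is the classical ``sieve before testing'' idea: Mertens' theorem gives the density $\Theta(1/\log T)$ of $T$-rough numbers, and balancing the filter cost $T$ per draw against the thinned-out sequence of full tests recovers the same bound. Your argument is more portable---it works with any black-box $\calO(\log N)$ primality test---whereas the paper's argument is tailored to this model, leaning on the incremental-factorial trick that is peculiar to unit-cost $(+,-,\times,\mydiv)$. Conversely, the paper avoids the analytic inputs you need (Mertens, control of the Legendre-sieve error $2^{\pi(T)}=N^{o(1)}$), trading them for the single inequality from \cite{Montgomery}.
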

\begin{proof}
%
First check whether $N$ itself is prime: by 
testing whether $N$ divides $(N-1)!$
(Wilson's Theorem); using integer division,
this can be done in $\calO(\log N)$
operations over $(+,-,\times,\mydiv)$ \mycite{Section~3}{Shamir}.
From that, each adjacent factorial $(N+k)!$, $k=0,\ldots,K-1$,
is reachable in constant time: that is, after having tested
primality of $N$, corresponding checks for $N+1,N+2,\ldots,\ldots,N+K$
are basically free when $K:=\calO(\log N)$.

So now guess some $\calO(\log N)$-bit number $M\leq N$ 
and then test integers $N+M,N+M+1,\ldots,N+M+K$
in total time $\calO(\log N)$ as above.
We claim that this succeeds with probability $\Omega(\loglog N/\log N)$,
hence the claim follows by repeating independently at
random for $\calO(\log N/\loglog N)$ times.
Indeed the \textsf{Prime Number Theorem} asserts
between $N$ and $2N$ to lie $\Omega(N/\log N)$ primes;
on the other hand every interval of length $K$ between
$N$ and $2N$ contains at most $\pi(K)\leq\calO(K/\log K)$
primes \cite{Montgomery}: hence by pigeon hole, 
among these $N/K$ intervals, 
a fraction of at least $\Omega(\log K/\log N)$ must contain
at least one prime.
\qed\end{proof}
Concerning an even faster and deterministic way of constructing
primes, we

\begin{remark} \label{r:Primes}
In 1947, \textsc{W.H.~Mills} proved the existence of a real
number $\theta\approx1.3063789$ \cite{Caldwell} such that
$p_n:=\lfloor \theta^{3^n}\rfloor$, $n\in\IN$, 
yields a (sub-)sequence of primes with $p_{n+1}>p_n^3$.
It is not known whether
$\theta$ is rational; if it is,
one can straight-forwardly extract from $\theta$ 
a prime $p_n>3^n=:N$
within $\calO(n)=\calO(\log N)$ steps over $(+,-,\times,\mydiv)$.

But even if $\theta$ turns out as an \emph{algebraic} \emph{ir}rational,
then still we obtain the same time bounds! ~
Indeed, in order to compute $\lfloor\theta^{N}\rfloor$,
\[ (\theta+\epsilon)^N 
\;\;=\;\; \theta^N \;+\; \underbrace{N\cdot\epsilon\cdot\theta^{N-1} 
+ \sum\nolimits_{k=2}^N \binom{N}{k}\cdot\epsilon^k\cdot\theta^{N-k}}_{<1} \]
shows that 
it suffices to calculate a rational approximation
$\theta'$ of $\theta$ up to error $\epsilon\approx2^{-N}/N$,
according to Lemma~\ref{l:Algebraic} 
in time $\calO(\log N)$, and to then
take $\lfloor \theta'^{N}\rfloor$.
\qed\end{remark}
%

\section{Practical Relevance} \label{s:Practice}
Any real computer is of course far from able to operate 
in constant time on arbitrary large integers,
the above algorithms therefore not practical in any way.
Or are they? 
The technological progress described by \textsf{Moore's Law}
over the last decades includes an exponential increase in the width of 
processors' arithmetical-logical units (ALU). Indeed, nowadays CPUs can
commonly operate on 64 or even on 128 bits in one single instruction: 
that is, the unit cost model \emph{is} valid for surprisingly large inputs
and likely to \emph{become} valid for even larger ones.

\begin{figure}[htb]
\centerline{\large\fbox{%
\begin{tabular}{r@{\;\;}c@{\;}c@{\;}|c@{\;}c@{\;}c@{\;}|c}
\LCC & & \lightgray & & \lightgray & & \lightgray \\
$\deg(p)\;\leq$   & \;5 & \;5 & \;4 & \;4 & \;4 & \;3  \\[0.3ex] 
$\|p\|_1\;\leq$   & \;5 & \;15& \;9 & \;13 &\;23& \;56 \\[0.3ex] 
$0\leq\: x\;\leq$ & \;4 & \;3 & \;8 & \;7 & \;6 & \;21
\\[0.3ex]
$\bigint=$ & \;\texttt{0x1401} & \;\texttt{0x1000} & \;\texttt{0x9001} & \;\texttt{0x8000} & \;\texttt{0x7471} 
& \;\texttt{0x80000}\ECC
\end{tabular}}}
\caption{\label{f:Ranges}Polynomials and argument ranges
for Algorithm~\ref{a:Bshouty} to work on \textsc{x86-64} CPUs}
\end{figure}

Specifically concerning Algorithm~\ref{a:Bshouty},
it already now covers many polynomials of degree up to five
(i.e. with six free coefficients):
one can easily see the largest intermediate result
to arise from the multiplication in Step~3; which then gets
integer-divided (and thus smaller again) in Step~4.
This corresponds rather nicely to two instructions 
provided by systems like {AMD64} \mycite{Section~3.3.6}{AMD64}
and {Intel64} \mycite{Section~3.2}{Intel64}:
\texttt{mulq} multiplies two 64-bit unsigned integers to
return a full 128-bit result; while \texttt{divq} 
obtains both quotient and remainder of dividing a 
128-bit numerator by a 64-bit denominator.
So whenever, in addition to the conditions
on $\bigint$ imposed by Scholium~\ref{s:Bshouty}, 
$p(\bigint)<2^{64}$ holds,
each step of Algorithm~\ref{a:Bshouty} translates
straight-forwardly to \emph{one} \textsf{x86-64} instruction.
Figure~\ref{f:Ranges} lists some example classes of
polynomials\footnote{polynomials of higher degree $D$
can be treated as $\lceil D/(d+1)\rceil$ polynomials
of degree $\leq d$ and then applying Horner's method to $x^{d+1}$}
and argument ranges which comply with
these constraint; the shaded areas indicate that
$\bigint$ can be chosen as a power of 2 to further
replace the integer divisions in Steps~4 and 5
by a shift and a binary conjunction, respectively.
This leads to the realization indicated in the
left of Figure~\ref{f:assembler}.

\begin{figure}[htb]\hspace*{\fill}
\begin{minipage}[c]{0.40\textwidth}
\fbox{\begin{minipage}[c]{0.95\textwidth}\texttt{%
\#Input: $\bigint-x$  in \%rsi\\
\#Constants:\\
\# $p(\bigint)$ in \%rdi.\\
\# $\bigint^{d+1}$ in \%rdx:\%rax\\
\# \ (may occupy >64bit)\\
\# $\bigint-1$  in \%ebx,\\
\# $64-d\cdot\log_2(\bigint)$ in \%cl\\
divq  \%rsi\\
mulq  \%rdi\\
shld  \%cl,\%rax,\%rdx\\
andl  \%ebx,\%edx\\
\#Output: $p(x)$ in \%edx,\\
\# \%rax destroyed.
}
\end{minipage}}\end{minipage}\hfill\begin{minipage}[c]{0.40\textwidth}
\fbox{\begin{minipage}[c]{0.95\textwidth}\texttt{%
\# $x$ (byte!) in \%eax = \%ecx;\\
\# $d+1$ coeff bytes in (\%esi)\\
mulb (\%esi)\\
xorl \%ebx,\%ebx\\
movb 1(\%esi),\%bl\\
addl \%ebx,\%eax\\
mull \%ecx\\
\hspace*{1ex}:\\
mull \%ecx\\
movb $d$(\%esi),\%bl\\
addl \%ebx,\%eax\\
\# $p(x)$ in \%eax,\\
\# \%ecx \%ebx \%edx destroyed%
}
\end{minipage}}\end{minipage}\hspace*{\fill}
\caption{\label{f:assembler}\textsc{x86-64} GNU assembler realization
of Algorithm~\ref{a:Bshouty} and of Horner's method}
\end{figure}

In comparison with Horner's Method depicted to the right,
this amounts to essentially the elimination of $d-1$ (out of $d$)
multiplications at the expense of one division---in a sense a counter-part to
the converse direction, taken e.g. in \cite{Granlund}, of replacing
integer divisions by multiplications.

Now an actual performance prediction, and even a meaningful
experimental evaluation, is difficult in the age
of caching hierarchies and speculative execution. For instance
(e.g. traditional) 32-bit applications may leave large parts 
of a modern superscalar CPU's 64-bit ALU essentially
idle, in which case the left part of 
Figure~\ref{f:assembler} as a separate (hyper-)thread
can execute basically for free.

However even shorter than both Horner's and Bshouty's Algorithm 
for the evaluation of a fixed polynomial $p$ is
one (!) simple lookup in a pre-computed table storing 
$p$'s values for $x=0,1,\ldots,X$. On the other hand
when there are \emph{many} polynomials to be evaluated,
the tables required in this approach may become pretty
large; e.g. in case of $d=3$, $X=21$, and $\|p\|_1\leq56$
(right-most column in Figure~\ref{f:Ranges}), the
values of $p(x)$ reach up to $X^d\cdot\|p\|_1$, hence
do not fit into 16 bit and thus occupy a total
of $(X+1)\times 4$ bytes for each of the 
$\binom{\|p\|_1+d+1}{d+1}=487,635$ possible polynomials $p$:
far too much to be held in cache and thus prone to
considerably stall a nowadays computer; whereas 
the $487,635$ possible 64-bit values $p(\bigint)$
do fit nicely into the 4MB L2-cache of modern CPUs, 
the corresponding four byte coefficients per polynomial
(cf. right part of Figure~\ref{f:assembler})
even into 2MB. One may therefore regard Algorithm~\ref{a:Bshouty}
as a compromise between table-lookup and Horner's Method.
\section{Conclusion}
We presented algorithms which, using integer division and
related non-arithmetic operations like bitwise conjunction
or greatest common divisor, accelerate polynomial
evaluation, linear algebra, and number-theoretic calculations to 
optimal running times. Several solutions would depend on
deep open number-theoretical hypotheses, showing that
corresponding lower bounds are probably quite difficult to obtain.
Other problem turned out as solvable
surprisingly fast (and actually \emph{beating} information-theoretical
lower bounds) when providing some more or less generic
integers as additional input.

On the other hand, these large numbers would suffice to
be of size `only' doubly exponential---and thus quickly
computable when permitting
leftshifts $\leftarrow:y\mapsto 2^y$
or, more generally, exponentiation 
$\IN\times\IN\ni(x,y)\mapsto x^y$ as primitive
at unit cost.
In view of the hierarchy ``addition, multiplication, exponentiation'',
it seems interesting to gauge the benefit of 
level $\ell$ of Ackermann's function $A(\ell,\cdot)$
to seemingly unrelated natural problems over integers.


\end{document}